\newcommand{\ignore}[1]{}
\def\E{\mathbb{E}}
\def\P{\mathbb{P}}
  \newtheorem{thm}{Theorem}
  \newtheorem{theorem}{Theorem}
  \newtheorem{cor}[thm]{Corollary}
\providecommand{\keywords}[1]{\textbf{\textit{Keywords ---}} #1}
\title{Ranging success probability of PPP distributed automotive radar in presence of generalized fading}
\author{\IEEEauthorblockN{ Sudharsan Parthasarathy$^{(a)}$, Rakshith Jagannath$^{(b)}$,}\\
\IEEEauthorblockA{%Department of Electrical Engineering,\\
${}^{(a)}$Dept. of Electronics and Communication Engineering, National Institute of Technology  Tiruchirappalli, India\\
${}^{(b)}$School of Electrical and Electronic Engineering, Nanyang Technological University, Singapore\\
sudharsan@nitt.edu, rakshith.jagannath@ntu.edu.sg}
}
\begin{document}
\maketitle
\begin{abstract}
% Can make the 2nd sentence better.
In automotive radar applications, multiple radars are used in all vehicles for improving the imaging quality. However this causes radar-to-radar interference from neighbouring vehicles, thus reducing the imaging quality. One metric to measure the imaging quality is ranging success probability. The ranging success probability is the probability that a multiple radar system successfully detects an object at a given range, under certain operating conditions.   In state-of-the-art literature,  closed form expressions for ranging success probability have been derived assuming no fading in desired signal component.   Similarly in literature,  though  distribution of fading in interferers is assumed to be arbitrary, closed form expression is derived only for no-fading assumption in interferers. As fading is always present in a wireless channel, we have derived ranging success probability assuming desired channel experiences the popular Rayleigh fading. And we have assumed generalized $\kappa$-$\mu$ shadowed fading for interfering channels that generalizes many popular fading models such as Rayleigh, Rician, Nakagami-$m$, $\kappa$-$\mu$ etc.  The interferers are assumed to be located on points drawn from a Poisson point process distribution. We have also studied how the relationship between shadowing component and number of clusters can affect the impact of LOS component on ranging success probability. 
\end{abstract}
\keywords{Ranging success probability, automotive radar, Poisson point process, generalized $\kappa$-$\mu$ shadowed fading}
%In this letter, closed form expression of ranging success probability for automotive radar application is derived assuming interferers experience generalized $\kappa$-$\mu$ shadowed fading. Also, the interferers are assumed to follow a Poisson point process distribution. $\kappa$-$\mu$ shadowed fading generalizes many popular fading models such as Rayleigh, Rician, Nakagami-$m$, $\kappa$-$\mu$ etc. Thus, ranging success probability has been derived for all these special cases too in this letter. The theoretical expressions derived  are verified by simulation. This is a significant improvement over current literature in which closed form expression is available only for no-fading case, which is also derived as a special case in this letter and validated.  

\section{Introduction}
Automotive radar is the key component in successful development of driverless cars \cite{heathradar}.  With increase in deployment of driverless cars, interference from other vehicles will affect radar operations \cite{radarinterferencemitigation}. To quantify the impact of interference, a very popular metric that is used is ranging success probability, which is defined as the probability of reliably detecting a target given a certain set of operating conditions \cite{radar}. 

Applying stochastic geometry based modeling for automotive radar applications has been pioneered in \cite{radar}. To model the interference, we assume that the interfering vehicles are distributed in a 1-dimensional Poisson point process (PPP) as in \cite{radar}. PPP based modeling of automotive radar has gained a lot of attention recently in \cite{jeyaraj2017}, \cite{munari2018stochastic}. PPP based modeling of interference is already very popular in cellular networks \cite{andrews2011tractable}. Apart from modeling modern networks very well, PPP based modeling also helped in deriving closed form expression of coverage probability in cellular networks.

In \cite{radar}, fading is not considered in desired channel and interferers are assumed to experience arbitrary fading. Closed form expression for ranging success probability was derived assuming no fading for interfering channels.  Unlike cellular networks, here the desired signal  received back at the source is composed of the signal from the source to target and the signal back from target to source. Hence the desired channel power at the source is a product of the power received at target and power received back at source. If fading is assumed, the desired power will be product of two independent channel power random variables, making the analysis difficult. In this paper we assume the popular Rayleigh fading distribution for the desired channel. Hence the desired channel power is product of independent exponentially distributed random variables whose PDF can be expressed in terms of Bessel function. Hence, we have used Gamma-Laguerre integral approach to express the PDF as weighted sum of Gamma distribution, to derive the ranging success probability. To the best of our knowledge, such analysis of a ``product channel" has not been done in literature  for a PPP network. 

In this paper we assume the interferers experience generalized $\kappa$-$\mu$ shadowed fading \cite{paris2014statistical} and derive a closed form expression for ranging success probability.  The advantage is that $\kappa$-$\mu$ shadowed fading generalizes popular fading distributions such as Rayleigh, Rician, Nakagami etc.  Hence closed form expressions of ranging success probability  can be derived as special cases when interferers experience Rayleigh, Rician, Nakagami, $\kappa$-$\mu$ fading etc. also

Previously, $\kappa$-$\mu$ shadowed fading has been used extensively for analysis of cellular networks, as it generalizes popular fading models. In \cite{sudharsancoverage}, closed form expression for coverage probability of a cellular network was derived in the presence of $\kappa$-$\mu$ shadowed fading. In \cite{chun1},  \cite{trigui2017unified}  5G cellular networks were analysed in the presence of $\kappa$-$\mu$ shadowed fading. Similarly, communication networks were analysed in the presence of generalized fading for mm-wave \cite{kibilda2018performance} and device-to-device technologies \cite{chun2017stochastic}.

The PDF of $\kappa$-$\mu$ shadowed fading is in terms of Hypergeometric function. To make it amenable for analysis, PDF of $\kappa$-$\mu$ shadowed fading has been expressed in terms of density function of Gamma distribution in different ways. In \cite{sudharsancoverage}, by using definition of ${}_1F_1$ hypergeometric function, PDF of  $\kappa$-$\mu$ shadowed fading channel power was expressed as infinite sum of weighted Gamma density functions. In \cite{kumar2015approximate}, PDF of $\kappa$-$\mu$ shadowed fading was approximated by a single Gamma distribution using moment matching approach.  In \cite{lopez2017kappa} for integer parameters, PDF of $\kappa$-$\mu$ shadowed fading was expressed as finite sum of Gamma density functions. 

To the best of our knowledge, for the first time, in this paper we have studied how the relationship between shadowing component (m) and number of clusters ($\mu$) can affect the impact of LOS component ($\kappa$) on ranging success probability. We show that when there is full shadowing ($m$=0.5), with increase in LOS component $\kappa$ in interferer, $P_s$ increases. Similarly when there is no shadowing component ($m$=$\infty$), with increase in LOS component $\kappa$ in interferer, $P_s$ decreases.

As a practical application, our results can also be used to suggest when it would be best for city planners to increase longitudinal distance ($\delta_0$) for a given inter-lane distance or vice versa. We also show the importance of results derived in this paper assuming fading by comparing it with no-fading cases in desired and interfering channels. We have also observed how the value of distance to desired target ($R$) can influence the impact of change in desired path loss exponent ($\alpha_d$) on ranging success probability. Similarly, we have also observed how the value of longitudinal distance ($\delta_0$) can influence the impact of change in interferer path loss exponent ($\alpha_I$) on ranging success probability. The formulae derived for different fading scenarios will help system planners get an idea of SINR threshold required to achieve target success probability, for any of the popular fading that they observe in real life scenarios.

In Section II, system model is provided. In Section III, contributions of this paper and notations used are discussed. In Section IV, ranging success probability when desired channel experiences Rayleigh fading is derived. In Section V, ranging success probability when desired channel experiences no-fading is derived. In Section VI, results are discussed in great detail. In Section VII, conclusions and future work are provided. 

 \begin{center}
\begin{tabular}{ |c|c|c|c| } 
\hline
Desired\\channel &Interferer fading& Equation no.\\
\hline
\multirow{3}{4em}{Rayleigh-Rayleigh} & $\kappa$-$\mu$ shadowed & Exact \eqref{eqn:Ps_kappamushadowed} \\  \cline{2-3}
& $\kappa$-$\mu$ shadowed  & Approx. \eqref{eqn:Ps_kappamushadow_rayleigh_approximate} \\ \cline{2-3}
& Rician shadowed & \eqref{eqn:Ps_kappamushadow_rayleigh_approximate}  for $\mu$=1,$\kappa$=K    \\  \cline{2-3}
& $\kappa$-$\mu$ & \eqref{eqn:Ps_kappamu}  \\  \cline{2-3}
& Rician & \eqref{eqn:Ps_kappamu} for $\mu$=1,$\kappa$=K  \\  \cline{2-3}
&Nakagami-m & \eqref{eqn:Ps_nakagami_desiredrayleigh} \\ \cline{2-3}
&Rayleigh & \eqref{eqn:Ps_nakagami_desiredrayleigh}  for $\hat{m}$=1 \\ \cline{2-3}
&Rayleigh,  arbitrary L & \eqref{eqn:Ps_rayleigh_desiredrayleigh_L} \\ \cline{2-3}
\hline
\multirow{3}{4em}{No fading} & $\kappa$-$\mu$ shadowed & Exact \eqref{eqn:Ps_kappamushadowedfading} \\  \cline{2-3}
& $\kappa$-$\mu$ shadowed  & Approx. \eqref{eqn:Ps_kappamushadowed_approximate} \\ \cline{2-3}
& Rician shadowed  &  \eqref{eqn:Ps_kappamushadowed_approximate} for $\mu$=1,$\kappa$=K    \\ \cline{2-3}
& $\kappa$-$\mu$  & \eqref{eqn:Ps_kappamufading} \\ \cline{2-3}
& Rician & \eqref{eqn:Ps_kappamufading} for $\mu$=1,$\kappa$=K   \\ \cline{2-3}
& Nakagami-m & \eqref{eqn:Ps_nakagami} \\ \cline{2-3}
& Rayleigh & \eqref{eqn:Ps_rayleigh} \\ \cline{2-3}
\hline
\end{tabular}
\end{center}
Table I: List of Contributions

\begin{tabular}{ |p{1.3cm}|p{6cm}|  }
 \hline
 \hspace{-1mm}Notation & Description \\
 \hline
 S  & Signal Power  \\
\hline
 I  & Total interference power  \\
\hline
 $I_x$  & Interference power at distance x from source  \\
\hline
$\alpha_d$  & Path loss exponent of desired channel  \\
\hline
$\alpha_I$  & Path loss exponent of interferer  \\
\hline
R  & Distance between source and target  \\
\hline
L & Distance between lanes \\
\hline
$\delta_0$ & Minimum horizontal distance to interferer \\
\hline
$P_0$ & Power of transmission from source \\
\hline
$G_t$ & Transmit antenna gain \\
\hline
$\sigma_c$ & Radar cross section area of the target \\
\hline
$A_e$ & Effective area\\
\hline
$g_0$ & Fading channel power from source to target \\
\hline
$g_0^{\prime}$ & Fading channel power from target to source \\
\hline
$\overline{g_0}$ & Fading channel power from source to target \\
\hline
$\overline{g_0^{\prime}}$ & Fading channel power from target to source \\
\hline
$\gamma_0$ & $G_t A_e P_0/(4 \pi)$ \\
\hline
$\gamma_1$ & $\sigma_c/(4 \pi)$ \\
\hline
$c_0$ & $\gamma_0 \gamma_1$ \\
\hline
$\theta$ & Beamwidth of antenna \\
\hline
$P_s$ & Ranging success probability \\
\hline
N & Number of weights\\
\hline
\end{tabular}

\begin{tabular}{ |p{1.3cm}|p{6cm}|  }
 \hline
 \hspace{-1mm}Notation & Description \\
 \hline
$\sigma^2$ & AWGN power \\
\hline
$\rho_I$ & Density of PPP of interferers \\
\hline
$\mu$ & Number of clusters in $\kappa$-$\mu$ shadowed fading\\
\hline
$\kappa$ & LOS component in $\kappa$-$\mu$ shadowed fading \\
\hline
$m$ & Shadowing component in $\kappa$-$\mu$ shadowed fading \\
\hline
$g_x$ & Power of interference fading channel \\
\hline
$K$ & Rician shape parameter \\
\hline
$\hat{m}$ & Nakagami shape parameter \\
\hline
$\overline{g_p}$ &  $\overline{g_0}$ $\overline{g_0^{\prime}}$\\ 
\hline
\end{tabular}

Table II: List of Notations

\section{System Model}
\begin{figure}
\begin{tikzpicture}
\draw (0,0) -- (1,0) -- (1,1) -- (0,1) -- (0,0);
\draw(0.5,0.5) node{Source};
\draw (7,0) -- (8,0) -- (8,1) -- (7,1) -- (7,0);
\draw(7.5,0.5) node{Target};
\draw (1,0.5) -- (7,0.5);
\draw(4,0.8) node{R};
\draw (5,2) -- (7,2) -- (7,3) -- (5,3) -- (5,2);
\draw(6,2.5) node{Interferer 1};
\draw (1,0.5) -- (5,2.5);
\draw(3,1.3) node{x};
\draw (1,0.5) -- (1,2.5);
\draw (5,2.5) -- (1,2.5);
\draw(1.2,1.5) node{L};
\draw(2,2.7) node{$\delta_0$};
\draw (1,.5) -- (1.5,.8);
\draw (1,.5) -- (1.5,.2);
\draw(1.3,.5) node{$\theta$};
\end{tikzpicture}
\label{fig:system}
\caption{System Model}
\end{figure}
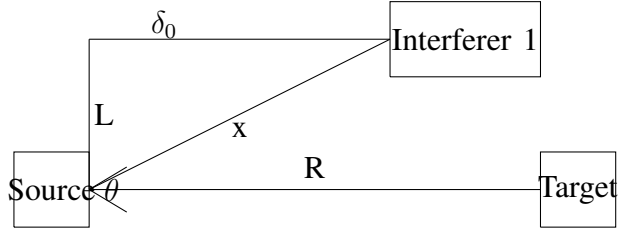

We consider the same system model (Fig. 1) as in \cite{radar} for a fair comparison.  The source is at a distance R from the target. Let $L$ be the distance of separation to the lane in which interfering vehicles move in opposite direction. Let $\delta_0$ be the minimum horizontal distance beyond which the vehicles coming in opposite lane act as interferers. The distance $\delta_0$ is related to the beamwidth of the antenna. 

\subsection{Signal Power}
 Let $P_0$ be the transmit radar power, $G_t$ be the transmit antenna gain, $A_e$ be the effective area, $\sigma_c$ be the radar cross section area of the target, $g_0$ be the channel fading power from source to target, $g_0^{\prime}$ be the channel fading power from target to source. The signal from source radar hits the target and returns to the source. So the received signal power at source is given as 
\begin{align}
S &=\frac{P_0 G_t g_0}{4 \pi R^{\alpha_d}} \frac{\sigma_c A_e g_0^{\prime}}{4 \pi R^{\alpha_d}} \nonumber \\  & =  \gamma_0 \gamma_1  g_0 g_0^{\prime} R^{-2 \alpha_d} \nonumber \\ &=  c_0 g_p R^{-2 \alpha_d} 
\label{eqn:signalpower}
\end{align}
where $g_p=g_0 g_0^{\prime},$ $\gamma_0=\frac{G_t A_e P_0}{4 \pi}$ =$G_t^2 P_0 (\frac{c}{4 \pi f})^2,$  $\gamma_1=\frac{\sigma_c}{4 \pi},$ $c_0=\gamma_0 \gamma_1 ,$ 
f is the operating frequency, c is the velocity of light, $\alpha_d$ is the desired channel's path loss exponent.  When $\alpha_d$=2, the standard radar equation, which follows the inverse square law  is obtained.

\subsection{Interference power}
The vertical distance $L$ between two lanes is related to beamwidth of antenna as \cite{radar} $L= \delta_0 tan(\frac{\theta}{2}).$ Interference from interferer at distance $||x||$, with fading power $g_x$ is given as  $I_x= \gamma_0  ||x||^{-\alpha_I} |g_x| .$ Total interference power is  
\begin{equation}
I= \sum\limits_{x \epsilon \Phi} I_x,
\label{eqn:interference}
\end{equation}
where $\Phi$ is the homogeneous Poisson point process of interferers with density $\rho_I$. 

The minimum distance of interferer is 
\begin{equation}
||x||=\sqrt{r^2+L^2},  r>\delta_0
\label{eqn:mindist}
\end{equation}
Ranging success probability  is  defined as 
\begin{equation}
P_s =  \P \left(\frac{S}{I+\sigma^2} >T \right)  
\label{eqn:Ps}
 \end{equation}
 where $\sigma^2$ is the additive white Gaussian noise power.
 \section{Contributions and Notations}

\vspace{5mm}
First we will derive the ranging success probability assuming the desired channels experience Rayleigh fading. Hence in \eqref{eqn:signalpower}, $g_0$, $g_0^{\prime}$ are independent and exponentially distributed of mean $\overline{g_0}$, $\overline{g_0^{\prime}}$. For this, the ranging success probability will be derived by expressing the ranging success probability in terms of Laplace transform of interference. This is done by expressing product of density function of $g_0$, $g_0^{\prime}$ which is a modified Bessel function of second kind, zero order in terms of weighted sum of exponential PDF using Gaussian Laguerre method.
In \cite{radar} no fading was considered in desired channel. Though arbitrary fading was considered in interferers, closed form results were derived only for special case of no-fading. Here we assume the interferers to experience generalized $\kappa$-$\mu$ shadowed fading and derive ranging success probability for many special cases of fading in interferers such as Rayleigh, Rician, Nakagami-$m$, $\kappa$-$\mu$ etc. In \cite{radar}, ranging success probability was derived for specific parameters such as path loss exponent of 2, $L$=0, $\delta_0$=0. This is because.  to derive  ranging success probability, without considering fading in desired channel, requires the CDF of interference. CDF of interference can be derived in closed form, only for these parameters. Whereas by considering fading, we have expressed PDF of desired signal power in terms of weighted sum of exponential PDF. This enables us to express ranging success probability in terms of  Laplace transform of interference alone. As Laplace transform of interference need not be inverted to derive the PDF/CDF of interference, ranging success probability can be derived for arbitrary parameters of $\alpha$, $L$, $\delta_0$. 

Next, ranging success probability is derived  when there is no fading in the desired channel as in \cite{radar} i.e. when $g_0=1$, $g_0^{\prime}=1$ in \eqref{eqn:signalpower}, but by assuming the fading in interference to experience $\kappa$-$\mu$ shadowed fading. In \cite{radar}, closed form expressions were given only when interfering channels experience no-fading. From the ranging success probability derived for $\kappa$-$\mu$ shadowed fading model in interferers, the same can be derived when interferers experience Rayleigh, Nakagami-$m$, Rician, $\kappa$-$\mu$ fading etc. as special cases. Also ranging success probability is derived when there is no-fading in interference as a special case, validating the expression derived in \cite{radar}.

Our contributions mentioned above, are listed in Table I to provide a quick overview to readers.  The list of notations that are used in this paper is given in Table II. In the next Section, we will derive  the ranging success probability when desired channel experiences Rayleigh fading.

\section{Ranging success probability when desired channel experiences Rayleigh fading}

In this Section, ranging success probability is derived when desired channel experiences Rayleigh fading, interfering channels experience generalized $\kappa$-$\mu$ shadowed fading channel. Also ranging success probability is derived when interferers experience other fading models such as Rayleigh, Rician, Nakagami-m, $\kappa$-$\mu$ etc. as special cases.

In \eqref{eqn:Ps}, ranging success probability is defined as 
\begin{align*}
P_s &= P(\frac{S}{I+\sigma^2}>T)  \\
&= P(S>T(I+\sigma^2))
\end{align*}
Substituting for signal power S from \eqref{eqn:signalpower},
\begin{align}
P_s &= P(c_0 g_p R^{-2 \alpha_d}>T(I+\sigma^2)) \\
&= P \left(g_p > \frac{T(I+\sigma^2) R^{2 \alpha_d}}{c_0} \right)
\label{eqn:Ps_gp}
\end{align}
So we have to derive the PDF and CDF of the product channel $g_p=g_0 g_0^{\prime}$.

As the fading of desired channel is Rayleigh, $g_0$ and $g_0^{\prime}$ are independent and exponentially distributed of mean $\overline{g_0}$, $\overline{g_0^{\prime}}$ respectively. Product of random variables $g_0$ and $g_0^{\prime}$ is $g_p$ whose mean is $\overline{g_p}= \overline{g_0} \overline{g_0^{\prime}} $.

So PDF of $g_p$ is given in \cite{exponential_thesis} as
\[f_{g_p} (x)=  \frac{2 K_0(2 \sqrt{ \frac{x}{  \overline{g_p } }} ) }{  \overline{g_p }}\]
where $K_0$ is modified Bessel function of second kind of zeroth order defined as
\[K_0(z) =\frac{1}{2} \int\limits_{0}^{\infty}  \frac{e^{-z(y+1/y)/2}}{y} d y \]
\[f_{g_p} (x)=  \frac{1}{  \overline{g_p }}  \int\limits_{0}^{\infty}  \frac{e^{-\sqrt{ \frac{x}{  \overline{g_p } }} (y+1/y)}}{y} d y  \]
Substituting $t=\sqrt{\frac{x}{ \overline{g_p } }} y $,
\begin{equation}
f_{g_p} (x)=  \frac{1}{  \overline{g_p }}  \int\limits_{0}^{\infty}  \frac{e^{-(t+\frac{x}{ \overline{g_p}t})}}{t} d t  
\label{eqn:fgp}
\end{equation}

In \cite{atapatu}, it is given that if 
\begin{equation}
f_{\gamma}(x) =\frac{\lambda^m x^{m-1}}{ \Gamma(m) \Gamma(k)} \int_0^{\infty} e^{-t} g(t) 
\label{eqn:fgamma}
\end{equation}
where 
\begin{equation}
g(t)=t^{\alpha-1} e^{-\frac{ \lambda x}{t}},
\label{eqn:fgamma_gt}
\end{equation}
$\lambda=\frac{km}{\overline{\gamma}}$, $\alpha=k-m$, then $I= \int_0^{\infty} e^{-t} g(t)$ can be represented as $I \approx \sum\limits_{i=1}^N w_i g(t_i)$. The weights $w_i$ and abscissas $t_i$ are obtained using Gaussian-Laguerre integration \cite{handbook}. Hence $f_{\gamma}(x)$ can be represented as weighted sum of PDF of Gamma distributions given as $\sum\limits_{i=1}^N w_i f_i(x)$ where $f_i(x)=\frac{ \Psi_i^{\beta_i} x^{\beta_i-1} e^{-\Psi_i x}}{ \Gamma(\beta_i)}$. The parameters are $\beta_i=m$, $\Psi_i=\frac{\lambda}{t_i}$.

Comparing \eqref{eqn:fgamma} with \eqref{eqn:fgp}, $\lambda=\frac{1}{\overline{g_p}}$, $m$=1,
\begin{equation}
g(t)=e^{-\frac{x}{\overline{g_p}t}}.
\label{eqn:fgp_gt}
\end{equation}
Comparing \eqref{eqn:fgamma_gt} with \eqref{eqn:fgp_gt}, $\alpha=0$, $k=1$, we get
\begin{equation}
    f_{g_p}(x) =\sum\limits_{i=1}^N \frac{w_i e^{-\frac{x}{\overline{g_p} t_i  }}}{ \overline{g_p} t_i  } 
    \label{eqn:fgp_gamma}
    \end{equation}
    
    So $f_{g_p}(x)$ is a weighted sum of exponentials of mean $\overline{g_p} t_i  $. The weights $w_i$ and abscissas $t_i$ are computed numerically in Matlab or Mathematica using standard functions for Gaussian-Laugerre method.
As weights $w_i$ sum to 1, CCDF is 
\begin{equation}
    \overline{F}_{g_p}(x) =\sum\limits_{i=1}^N w_i e^{-\frac{x}{\overline{g_p} t_i  }}
    \label{eqn:Fgp_gamma}
    \end{equation}
\subsection{$\kappa-\mu$ shadowed fading interferer channels}
In this Section, ranging success probability is derived when desired channel experiences Rayleigh fading and interferers experience $\kappa$-$\mu$ shadowed fading. First, an accurate expression is derived in Theorem \ref{theorem:kappamushadowed_rayleigh} and then a simpler approximation is derived in Corollary \ref{cor:kappamushadowed_rayleigh_approximate}.
\begin{theorem}
For arbitrary $\alpha_d$, $\alpha_I$, $\delta_0$, when desired channel experiences Rayleigh fading and interferers experience $\kappa$-$\mu$ shadowed fading, ranging success probability $P_s$ is
\begin{equation}
\scriptstyle
\sum\limits_{i=1}^N   \frac{w_i e^{-\rho_I\sum\limits_{l=0}^{\infty} w_{Il}\delta_0 (-1 + {}_2F_1(-\frac{1}{\alpha_I}, l+\mu, \frac{\alpha_I-1}{\alpha_I}, -\frac{T R^{2 \alpha_d} \delta_0^{-\alpha_I} \gamma_0}{c_0 t_i \overline{g_p} \mu (1+\kappa) })) } }{  e^{\frac{T \sigma^2 R^{2 \alpha_d}}{c_0 t_i \overline{g_p}}} }
\label{eqn:Ps_kappamushadowed}
\end{equation}
where $w_{Il}=\frac{ \Gamma(m+l) (\mu \kappa)^l m^m }{\Gamma(m) l! (\mu \kappa+m)^{l+m} }$, $w_i$  and $t_i$ are weights and abscissas found from Gaussian-Laugerre method.
\label{theorem:kappamushadowed_rayleigh}
\end{theorem}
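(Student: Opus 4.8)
The plan is to reduce $P_s$ to the Laplace transform of the aggregate interference, exploiting that the CCDF of the product channel $g_p$ is already a finite mixture of exponentials. Starting from \eqref{eqn:Ps_gp} and conditioning on the interference $I$, I would write $P_s = \E_I[\overline{F}_{g_p}(T(I+\sigma^2)R^{2\alpha_d}/c_0)]$ and substitute the mixture \eqref{eqn:Fgp_gamma}. Writing $s_i = T R^{2\alpha_d}/(c_0 t_i \overline{g_p})$, this yields
\begin{equation}
P_s = \sum_{i=1}^N w_i\, e^{-s_i \sigma^2}\, \E_I\!\left[ e^{-s_i I} \right],
\end{equation}
so the entire problem collapses to the Laplace transform $\mathcal{L}_I(s_i)=\E_I[e^{-s_i I}]$. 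The deterministic factor $e^{-s_i\sigma^2}$ already reproduces the noise term $e^{-T\sigma^2 R^{2\alpha_d}/(c_0 t_i \overline{g_p})}$ sitting in the denominator of \eqref{eqn:Ps_kappamushadowed}.

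Next I would evaluate $\mathcal{L}_I(s)$ through the probability generating functional of the one-dimensional PPP $\Phi$ of intensity $\rho_I$. Since $I = \sum_{x\in\Phi} \gamma_0 \|x\|^{-\alpha_I} g_x$ with i.i.d. fading marks $g_x$, standard PPP theory gives
\begin{equation}
\mathcal{L}_I(s) = \exp\!\left( -\rho_I \int_{\delta_0}^{\infty} \Big( 1 - \E_{g_x}\big[ e^{-s \gamma_0 r^{-\alpha_I} g_x} \big] \Big)\, \d r \right).
\end{equation}
The crucial analytic device, taken from \cite{sudharsancoverage}, is to represent the $\kappa$-$\mu$ shadowed fading power as a countable Gamma mixture: with weights $w_{Il}$ as in the statement, a component of shape $l+\mu$ and rate $\mu(1+\kappa)$ (so that $\sum_l w_{Il}=1$ and the mean is $1$), one has $\E_{g_x}[e^{-u g_x}] = \sum_{l=0}^{\infty} w_{Il}\,(1+u/(\mu(1+\kappa)))^{-(l+\mu)}$. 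Setting $u = s\gamma_0 r^{-\alpha_I}$ and using $\sum_l w_{Il}=1$ lets me move the sum outside the integrand,
\begin{equation}
1-\E_{g_x}[e^{-u g_x}] = \sum_{l=0}^{\infty} w_{Il}\left[ 1 - \Big(1+\tfrac{s\gamma_0 r^{-\alpha_I}}{\mu(1+\kappa)}\Big)^{-(l+\mu)} \right].
\end{equation}

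The remaining work is the per-component spatial integral $\int_{\delta_0}^{\infty}[1-(1+c\,r^{-\alpha_I})^{-(l+\mu)}]\,\d r$ with $c = s\gamma_0/(\mu(1+\kappa))$. I would expand $(1+c r^{-\alpha_I})^{-(l+\mu)}$ by its binomial series and integrate term by term; the power-law integral $\int_{\delta_0}^\infty r^{-\alpha_I k}\,\d r = \delta_0^{1-\alpha_I k}/(\alpha_I k - 1)$ introduces exactly the factor $(-1/\alpha_I)_k/(1-1/\alpha_I)_k = -1/(\alpha_I k - 1)$, which recombines the series into $\delta_0\big(-1+{}_2F_1(-\tfrac{1}{\alpha_I}, l+\mu, \tfrac{\alpha_I-1}{\alpha_I}, -c\,\delta_0^{-\alpha_I})\big)$. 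Substituting $s=s_i$ turns $-c\delta_0^{-\alpha_I}$ into $-T R^{2\alpha_d}\delta_0^{-\alpha_I}\gamma_0/(c_0 t_i \overline{g_p}\mu(1+\kappa))$, and collecting the factors $w_i$, $e^{-s_i\sigma^2}$ and $\mathcal{L}_I(s_i)$ assembles \eqref{eqn:Ps_kappamushadowed}. I expect this last integral to be the main obstacle: identifying the hypergeometric function correctly requires $\alpha_I>1$ for convergence at infinity, and interchanging the infinite $l$-sum with the integral must be justified (e.g. by nonnegativity of each bracket and monotone/dominated convergence), since the binomial expansion itself only converges for $r > c^{1/\alpha_I}$ and the closed form is then obtained by analytic continuation.
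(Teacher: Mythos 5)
Your proposal is correct and follows essentially the same route as the paper's proof: reducing $P_s$ to $\sum_i w_i e^{-s_i\sigma^2}\mathcal{L}_I(s_i)$ via the exponential-mixture CCDF \eqref{eqn:Fgp_gamma}, applying the PPP probability generating functional, using the Gamma-mixture representation of the $\kappa$-$\mu$ shadowed power with weights $w_{Il}$ and rate $\mu(1+\kappa)$, and evaluating the per-component spatial integral into the ${}_2F_1$ form. The only difference is cosmetic: the paper quotes the integral identity $\int_{\delta_0}^{\infty}\bigl(1-(1+s r^{-\alpha_I})^{-m}\bigr)\,\d r = \delta_0\bigl(-1+{}_2F_1(-\tfrac{1}{\alpha_I},m,1-\tfrac{1}{\alpha_I},-s\delta_0^{-\alpha_I})\bigr)$ as known (and carries the inter-lane distance $L$ before invoking $r\gg L$), whereas you derive it by term-by-term integration of the binomial series, which is a valid verification of the same step.
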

\begin{proof}
Ranging success probability from \eqref{eqn:Ps_gp} is
\[P_s=  \overline{F}_{g_p}(\frac{T(I+\sigma^2) R^{2 \alpha_d}}{c_0})\]
Substituting CCDF from \eqref{eqn:Fgp_gamma}, 
\[P_s=E_I( \sum\limits_{i=1}^N w_i e^{-\frac{T(I+\sigma^2) R^{2 \alpha_d}}{c_0\overline{g_p} t_i  }} ) \]

\begin{equation}
P_s= \sum\limits_{i=1}^N w_i L_I(\frac{T R^{2 \alpha_d}}{c_0 t_i \overline{g_p}} )e^{-\frac{T \sigma^2 R^{2 \alpha_d}}{c_0\overline{g_p} t_i  }}  
\label{eqn:Ps_LI}
\end{equation}
where $L_I(s)$=$E_I(e^{-s I})$. So to derive the ranging success probability, $L_I(s)$ will be derived next.

Total interference power from \eqref{eqn:interference} is
\[ I= \sum\limits_{x \in \Phi} \gamma_0 g_x ||x||^{-\alpha_I} \]
Let $g_x$ be the power of the interferer $\kappa$-$\mu$ shadowed fading channel. Let $g_x$ be of unit mean power, without loss of generality.  Its probability density function is given as \cite{paris2014statistical}
\begin{equation}
f_{g_x}(x)= \frac{\mu^{\mu} m^m (1+\kappa)^{\mu}x^{\mu-1}  {}_1F_1(m;\mu;\frac{\mu^2 \kappa (1+\kappa) x}{\mu \kappa+m})}{e^{\mu (1+\kappa) x } \Gamma(\mu)  (\mu \kappa +m)^m} 
\label{eqn:fgx_kappamushadowed}
\end{equation}
where $_1F_1(a;b;z)=\sum\limits_{l=0}^{\infty} \frac{(a)_l z^l}{(b)_l l!}$, $(a)_l=\frac{\Gamma(a+l)}{\Gamma(a)}$. Using above definitions, and from \cite{sudharsancoverage}, we will represent the  PDF of power of $\kappa$-$\mu$ shadowed fading channel as weighted sum of PDF of Gamma distribution of shape and scale parameters $(l+\mu,\frac{1}{c})$.
So  \[f_{g_x}(x)= \sum\limits_{l=0}^{\infty} w_{Il}  \frac{e^{-c x} x^{l+\mu-1} c^{l+\mu} }{\Gamma(l+\mu)}\] where $c=\mu(1+\kappa),w_{Il}=\frac{\Gamma(l+\mu) (m)_l (\frac{\mu \kappa}{\mu \kappa+m})^l (\frac{m}{m+\mu \kappa})^m}{\Gamma(\mu) l! (\mu)_l}$. Using the above discussion, we will derive Laplace transform of interference.

\begin{align*}
    L_I(s) &= \E(e^{-s I})\\
    &= \E(e^{-s \sum\limits_{x \epsilon \Phi} I_x})\\
    &= \E_{I_x, \Phi}(\prod_{x \epsilon \Phi} e^{-s  I_x})\\
    &= \E_{g_x, \Phi}(\prod_{x \epsilon \Phi} e^{-s  \gamma_0 g_x ||x||^{-\alpha_I}})
\end{align*}
Using probability generating functional \cite{gantinow}, 
\[ \E_{\Phi}(\prod_{x \epsilon \Phi} f(x))= \exp \left(-\lambda \int\limits_R (1-f(x)) d x  \right)  \]
Hence
\begin{equation}
L_I(s) =\exp \left(-\rho_I \int\limits_{\delta_0}^{\infty} (1-\E_g(e^{-s \gamma_0 g x^{-\alpha_I}})) d r  \right)
\label{eqn:LI}
\end{equation}
where $x=\sqrt{r^2+L^2}$ from \eqref{eqn:mindist}.

For Gamma distributed random variable X of shape and scale parameters (m,n), 
\begin{equation}
\E(e^{-Xs})=\frac{1}{(1+sn)^m}.
\label{eqn:gamma_laplace}
\end{equation}
So for a $\kappa-\mu$ shadowed random variable whose probability density function can be represented as weighted sum of Gamma PDF of parameters $(l+\mu,\frac{1}{c})$, using \eqref{eqn:gamma_laplace}
\begin{equation}
 \E(e^{-sg})=\sum\limits_{l=0}^{\infty} \frac{w_{Il}}{(1+s/c)^{l+\mu}} 
 \label{eqn:kappamu_laplace}
 \end{equation}
As the weights sum to 1, substituting \eqref{eqn:kappamu_laplace} in \eqref{eqn:LI},
\begin{equation}
L_I(s) =e^{-\rho_I\sum\limits_{l=0}^{\infty} w_{Il} \int\limits_{\delta_0}^{\infty} (1-\frac{1}{(1+\frac{s \gamma_0}{c} (r^2+L^2)^{-\frac{\alpha_I}{2}})^{l+\mu}}) d r}
\label{eqn:LI_kappamushadowed_0}
\end{equation}
Initially we assume inter-lane distance $L$ to be much smaller than the longitudinal distance $r$ \cite{radar}. Later, we derive ranging success probability for arbitrary L too. 

Assuming $r>>L$, using the identity \[\scriptstyle \int\limits_{\delta_0}^{\infty} (1-\frac{1}{(1+s r^{-\alpha_I})^m}) d r =\delta_0 (-1 + {}_2F_1(-\frac{1}{\alpha_I}, m, 1-\frac{1}{\alpha_I}, -\frac{s}{\delta_0^{\alpha_I}}))\] in \eqref{eqn:LI_kappamushadowed_0}, we get 

\begin{equation}
L_I(s) =e^{-\rho_I\sum\limits_{l=0}^{\infty} w_{Il}\delta_0 (-1 + {}_2F_1(-\frac{1}{\alpha_I}, l+\mu, 1-\frac{1}{\alpha_I}, -\frac{s \gamma_0}{c d_0^{\alpha_I}})) }
\label{eqn:LI_kappamushadowed}
\end{equation}
Substituting \eqref{eqn:LI_kappamushadowed} in \eqref{eqn:Ps_LI} and $c=\mu(1+\kappa)$, $P_s$ in \eqref{eqn:Ps_kappamushadowed} is derived.
\end{proof}
In \cite{atapatu} the number of weights $N$ required to express PDF of different fading channels in terms of weighted sum of Gamma PDFs have been discussed. Similarly,  to express $\kappa$-$\mu$ shadowed fading PDF in terms of weighted sum of Gamma PDF, the number of weights ($w_{Il}$) required depends on  the parameters and is typically not very large \cite{sudharsanevm}. 

Instead of the accurate ranging success probability expression in \eqref{eqn:Ps_kappamushadowed}, a much simplified expression can be derived by approximating $\kappa$-$\mu$ shadowed random variable using a single Gamma random variable as in \cite{kumar2015approximate}.

\begin{cor}
For arbitrary $\alpha_d$, $\alpha_I$, $\delta_0$, when desired channel experiences Rayleigh fading and interferers experience $\kappa$-$\mu$ shadowed fading, approximate ranging success probability is
\begin{equation}
    P_s=\sum\limits_{i=1}^N \frac{w_i e^{-\rho_I \delta_0(-1+{}_2F_1(-\frac{1}{\alpha_I}, k, 1-\frac{1}{\alpha_I}, -\frac{T R^{2 \alpha_d} \gamma_0 \theta}{ c_0 \overline{g_p} t_i \delta_0^{\alpha_I}}  ))} }{e^{\frac{T \sigma^2 R^{2 \alpha_d}}{ c_0 \overline{g_p} t_i}}}
    \label{eqn:Ps_kappamushadow_rayleigh_approximate}
\end{equation}
where $(k,\theta)=( \frac{m  \mu (1+\kappa)^2}{ m+ \mu \kappa^2 
+2 m \kappa}, \frac{ m+ \mu \kappa^2 
+2 m \kappa}{m  \mu (1+\kappa)^2} )$.
\label{cor:kappamushadowed_rayleigh_approximate}
\end{cor}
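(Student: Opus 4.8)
The plan is to follow exactly the skeleton of the proof of Theorem~\ref{theorem:kappamushadowed_rayleigh}, the only change being that the $\kappa$-$\mu$ shadowed interference fading is no longer expanded into an infinite weighted sum of Gamma variables, but is instead replaced by a \emph{single} Gamma random variable of shape $k$ and scale $\theta$ obtained by moment matching as in \cite{kumar2015approximate}. Since the Rayleigh desired channel is untouched, the starting point is still the representation of $P_s$ in \eqref{eqn:Ps_LI}, namely $P_s=\sum_{i=1}^N w_i L_I\!\left(\frac{T R^{2\alpha_d}}{c_0 t_i \overline{g_p}}\right)e^{-\frac{T\sigma^2 R^{2\alpha_d}}{c_0\overline{g_p}t_i}}$, so the whole task reduces to re-deriving $L_I(s)$ under the single-Gamma model.

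First I would invoke the moment-matching result of \cite{kumar2015approximate}: the unit-mean $\kappa$-$\mu$ shadowed fading power $g_x$ is approximated by a Gamma variable whose first two moments coincide with those of $g_x$, which yields the shape/scale pair $(k,\theta)=\left(\frac{m\mu(1+\kappa)^2}{m+\mu\kappa^2+2m\kappa},\frac{m+\mu\kappa^2+2m\kappa}{m\mu(1+\kappa)^2}\right)$ stated in the corollary. With this single-Gamma surrogate the Laplace transform of the per-interferer fading follows directly from \eqref{eqn:gamma_laplace}, giving $\E_g(e^{-sg})=\frac{1}{(1+s\theta)^k}$, in place of the infinite sum \eqref{eqn:kappamu_laplace}.

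Next I would substitute this into the probability generating functional expression \eqref{eqn:LI}, obtaining $L_I(s)=\exp\!\left(-\rho_I\int_{\delta_0}^{\infty}\bigl(1-(1+s\gamma_0\theta(r^2+L^2)^{-\alpha_I/2})^{-k}\bigr)\,dr\right)$. Applying the same $r\gg L$ simplification and the identical ${}_2F_1$ integral identity used in \eqref{eqn:LI_kappamushadowed_0}--\eqref{eqn:LI_kappamushadowed} (now with exponent $k$ and effective argument $s\gamma_0\theta$ in place of $s\gamma_0/c$ summed over $l$) collapses the integral to $L_I(s)=\exp\!\left(-\rho_I\delta_0\bigl(-1+{}_2F_1(-\tfrac{1}{\alpha_I},k,1-\tfrac{1}{\alpha_I},-\tfrac{s\gamma_0\theta}{\delta_0^{\alpha_I}})\bigr)\right)$. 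Finally, setting $s=\frac{T R^{2\alpha_d}}{c_0 t_i\overline{g_p}}$ and inserting back into \eqref{eqn:Ps_LI} produces \eqref{eqn:Ps_kappamushadow_rayleigh_approximate} term by term.

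The bulk of the analytic work has already been done in the theorem, so the main obstacle here is verifying the moment-matching parameters $(k,\theta)$: one must confirm that equating the mean and second moment of the $\kappa$-$\mu$ shadowed power to those of a Gamma$(k,\theta)$ variable indeed returns the stated expressions, and that mean preservation keeps the surrogate unit-mean so the normalization of the PGFL is unaffected. Everything downstream---the PGFL substitution, the $r\gg L$ approximation, and the ${}_2F_1$ evaluation---is a direct single-term repetition of the corresponding steps in the theorem.
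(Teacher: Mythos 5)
Your proposal is correct and follows essentially the same route as the paper: the paper likewise invokes the single-Gamma moment-matching approximation of \cite{kumar2015approximate} with the stated $(k,\theta)$, substitutes $\E_g(e^{-sg})=(1+s\theta)^{-k}$ into the PGFL expression \eqref{eqn:LI} to get \eqref{eqn:LI_approx}, collapses the integral via the same ${}_2F_1$ identity (taking $L=0$, equivalent to your $r\gg L$ step), and plugs the resulting $L_I(s)$ into \eqref{eqn:Ps_LI}. The only cosmetic difference is that you explicitly flag verifying the moment-matching parameters and unit-mean preservation, which the paper delegates entirely to the citation.
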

\begin{proof}
The PDF of power of $\kappa-\mu$ shadowed fading interferer channel $g_x$ of unit mean, in \eqref{eqn:fgx_kappamushadowed} can be approximated by Gamma random variable of parameters $(k,\theta)=( \frac{m  \mu (1+\kappa)^2}{ m+ \mu \kappa^2 
+2 m \kappa}, \frac{ m+ \mu \kappa^2 
+2 m \kappa}{m  \mu (1+\kappa)^2} )$ \cite{kumar2015approximate}. Hence in \eqref{eqn:LI},
\[\E(e^{-s \gamma_0 g x^{-\alpha}}) = \frac{1}{(1+s \gamma_0 x^{-\alpha} \theta )^k} \]
where $x=\sqrt{r^2+L^2}$. 
\begin{equation}
    L_I(s)=e^{-\rho_I \int\limits_{\delta_0}^{\infty} (1-\frac{1}{(1+s \gamma_0 \theta (r^2+L^2)^{-\alpha_I/2})^k}) d r}
\label{eqn:LI_approx}
\end{equation}
Hence for $L$=0, following same steps as in Theorem \ref{theorem:kappamushadowed_rayleigh}, 
\begin{equation}
    L_I(s)=e^{-\rho_I \delta_0(-1+{}_2F_1(-\frac{1}{\alpha_I}, k, 1-\frac{1}{\alpha_I}, -\frac{s \gamma_0 \theta}{\delta_0^{\alpha}}  ))}
    \label{eqn:LI_kappamushadowed_rayleigh_approximate}
\end{equation}
Substituting \eqref{eqn:LI_kappamushadowed_rayleigh_approximate} in \eqref{eqn:Ps_LI}, $P_s$ in \eqref{eqn:Ps_kappamushadow_rayleigh_approximate} is derived. 
\end{proof}

 In \cite{kumar2015approximate} it has been shown that approximation is tight for different parameters of $\kappa$, $\mu$, $m$. We also show later through simulations that the ranging success probability in \eqref{eqn:Ps_kappamushadow_rayleigh_approximate} matches well with simulation. 

Next we will derive the ranging success probability when interferers experience different fading models such as Rayleigh, Rician, Nakagami-m etc. as special cases.
We derive these from the approximate ranging success probability in \eqref{eqn:Ps_kappamushadow_rayleigh_approximate}. By following same steps, special cases can be derived from accurate ranging success probability in \eqref{eqn:Ps_kappamushadowed} too.

For Nakagami-m, Rayleigh fading, the ranging success probability  derived  in Corollary \ref{cor:nakagami}-\ref{cor:rayleigh_L} from approximate $P_s$ in \eqref{eqn:Ps_kappamushadow_rayleigh_approximate} is same as deriving from accurate $P_s$ in \eqref{eqn:Ps_kappamushadowed}.  
\subsection{Rician shadowed fading}
For Rician shadowed fading, ranging success probability is derived from \eqref{eqn:Ps_kappamushadow_rayleigh_approximate}, for $\mu$=1, $\kappa$=K.

\subsection{$\kappa$-$\mu$ faded interferer channels}
In this Section we shall derive ranging success probability when interferers experience $\kappa$-$\mu$ fading, which is a special case of $\kappa$-$\mu$ shadowed fading for $m \rightarrow \infty$.
\begin{cor}
When desired channel experiences Rayleigh fading and interferers experience  $\kappa$-$\mu$ fading, ranging success probability is 
\begin{equation}
P_s=\sum\limits_{i=1}^N \frac{w_i e^{-\rho_I \delta_0(-1+{}_2F_1(-\frac{1}{\alpha_I}, k, 1-\frac{1}{\alpha_I}, -\frac{T R^{2 \alpha_d} \gamma_0 \theta}{ c_0 \overline{g_p} t_i \delta_0^{\alpha_I}}  ))} }{e^{\frac{T \sigma^2 R^{2 \alpha_d}}{ c_0 \overline{g_p} t_i}}}
\label{eqn:Ps_kappamu}
\end{equation}
where $(k,\theta)=( \frac{  \mu (1+\kappa)^2}{ 1 
+2  \kappa}, \frac{ 1 
+2  \kappa}{  \mu (1+\kappa)^2} )$.
\label{cor:kappamu}
\end{cor}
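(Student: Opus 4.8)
The plan is to obtain the $\kappa$-$\mu$ result as the $m \to \infty$ specialization of the $\kappa$-$\mu$ shadowed result already established in Corollary \ref{cor:kappamushadowed_rayleigh_approximate}. Since $\kappa$-$\mu$ fading is recovered from $\kappa$-$\mu$ shadowed fading precisely when the shadowing parameter $m \to \infty$, I expect the entire functional form of \eqref{eqn:Ps_kappamushadow_rayleigh_approximate} to carry over unchanged, with only the Gamma-approximation parameters $(k,\theta)$ updated to their limiting values.

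First I would note that the PDF of a unit-mean $\kappa$-$\mu$ faded channel can be approximated by a single Gamma random variable via moment matching, exactly as in \cite{kumar2015approximate}; this places $\E(e^{-s \gamma_0 g x^{-\alpha_I}})$ in the form $(1+s\gamma_0 x^{-\alpha_I}\theta)^{-k}$ used in the proof of Corollary \ref{cor:kappamushadowed_rayleigh_approximate}. Next I would take the limit $m \to \infty$ in the shadowed parameters $(k,\theta)=\left(\frac{m\mu(1+\kappa)^2}{m+\mu\kappa^2+2m\kappa},\,\frac{m+\mu\kappa^2+2m\kappa}{m\mu(1+\kappa)^2}\right)$. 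Dividing numerator and denominator by $m$ and letting $m\to\infty$ sends the $\mu\kappa^2/m$ term to zero, yielding $k \to \frac{\mu(1+\kappa)^2}{1+2\kappa}$ and $\theta \to \frac{1+2\kappa}{\mu(1+\kappa)^2}$, which are exactly the parameters stated in the corollary.

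With these limiting parameters in hand, the Laplace transform of interference \eqref{eqn:LI_kappamushadowed_rayleigh_approximate} and the substitution into \eqref{eqn:Ps_LI} proceed verbatim as in the shadowed case, so no new integration is required, and \eqref{eqn:Ps_kappamu} follows immediately. I do not anticipate a genuine obstacle here; the only point requiring care is confirming that the single-Gamma moment-matching approximation remains valid in the $m\to\infty$ limit, but this is guaranteed because the limiting $(k,\theta)$ coincide with the standard Gamma moment-match for the $\kappa$-$\mu$ distribution itself.
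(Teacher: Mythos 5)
Your proposal is correct and matches the paper's own proof: the paper likewise derives \eqref{eqn:Ps_kappamu} by taking $m \to \infty$ in the parameters of Corollary \ref{cor:kappamushadowed_rayleigh_approximate}, writing $k = \frac{m\mu(1+\kappa)^2}{m(1+\frac{\mu\kappa^2}{m}+2\kappa)} \to \frac{\mu(1+\kappa)^2}{1+2\kappa}$ and $\theta = 1/k$, with the rest of the expression carrying over unchanged. Your additional remark that the limiting $(k,\theta)$ coincide with the direct Gamma moment-match of the $\kappa$-$\mu$ distribution is a nice sanity check but not needed beyond what the paper does.
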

\begin{proof}
In \eqref{eqn:Ps_kappamushadow_rayleigh_approximate},  parameter $ k = \frac{m \mu (1+\kappa)^2}{ m (1 + \frac{\mu \kappa^2}{m}+2   \kappa)} .$
Allowing $m \rightarrow \infty$, $k = \frac{\mu (1+\kappa)^2}{  (1 +2   \kappa)}.$
Similarly $\theta$ is derived as $\theta=\frac{1}{k}$.
\end{proof}
\subsection{Rician  faded interferer channels}
For Rician fading, ranging success probability is derived from  \eqref{eqn:Ps_kappamu} by substituting $\mu$=1, $\kappa$=K.

\subsection{Nakagami-m  faded interferer channels}
In this Section we shall derive ranging success probability when interferers experience Nakagami-m fading of shape parameter $\hat{m}$, which is a special case of $\kappa$-$\mu$ fading for $\mu=\hat{m}$, $\kappa=0$.
\begin{cor}
When desired channel experiences Rayleigh fading and interferers experience  Nakagami-m fading, ranging success probability is 
\begin{equation}
P_s=\sum\limits_{i=1}^N   \frac{w_i e^{-\rho_I \delta_0 (-1 + {}_2F_1(-\frac{1}{\alpha_I}, \hat{m}, 1-\frac{1}{\alpha_I}, -\frac{T R^{2 \alpha_d} \gamma_0}{c_0 t_i \overline{g_p}  \hat{m} \delta_0^{\alpha_I}})) } }{  e^{\frac{T \sigma^2 R^{2 \alpha_d}}{c_0 t_i \overline{g_p}}} }
\label{eqn:Ps_nakagami_desiredrayleigh}
\end{equation}
\label{cor:nakagami}
\label{cor:rayleighdesired_nakagami}
\end{cor}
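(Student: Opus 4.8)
The plan is to obtain the Nakagami-$m$ result as a direct specialization of the $\kappa$-$\mu$ faded case already established in Corollary~\ref{cor:kappamu}, rather than redoing the Laplace-transform computation from scratch. Since Nakagami-$m$ fading with shape parameter $\hat{m}$ is exactly $\kappa$-$\mu$ fading with $\kappa=0$ and $\mu=\hat{m}$, I would simply substitute these two values into the expression \eqref{eqn:Ps_kappamu}, whose only $\kappa$- and $\mu$-dependence is carried through the Gamma shape and scale $(k,\theta)$.

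The key step is to evaluate $(k,\theta)=\left(\frac{\mu(1+\kappa)^2}{1+2\kappa},\frac{1+2\kappa}{\mu(1+\kappa)^2}\right)$ at $\kappa=0,\ \mu=\hat{m}$. Setting $\kappa=0$ collapses $(1+\kappa)^2=1$ and $1+2\kappa=1$, so $k=\hat{m}$ and $\theta=1/\hat{m}$. Putting $k=\hat{m}$ into the second slot of the ${}_2F_1$ and replacing $\theta$ by $1/\hat{m}$ in its fourth argument turns $-\frac{T R^{2\alpha_d}\gamma_0\theta}{c_0\overline{g_p}t_i\delta_0^{\alpha_I}}$ into $-\frac{T R^{2\alpha_d}\gamma_0}{c_0\overline{g_p}t_i\hat{m}\delta_0^{\alpha_I}}$, while the noise factor $e^{-T\sigma^2 R^{2\alpha_d}/(c_0\overline{g_p}t_i)}$ and the remaining arguments $-\tfrac{1}{\alpha_I}$ and $1-\tfrac{1}{\alpha_I}$ are untouched. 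This reproduces \eqref{eqn:Ps_nakagami_desiredrayleigh} term by term.

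There is essentially no analytic obstacle; the one point worth checking is that the moment-matched Gamma parameters coincide with the \emph{true} Nakagami-$m$ power law, whose shape is $\hat{m}$ and whose unit mean forces scale $1/\hat{m}$. Because the channel power under Nakagami-$m$ fading is genuinely Gamma-distributed, the single-Gamma ``approximation'' underlying \eqref{eqn:Ps_kappamushadow_rayleigh_approximate} is in fact exact here, which is precisely why specializing the approximate formula gives the same $P_s$ as specializing the accurate formula \eqref{eqn:Ps_kappamushadowed}, as the surrounding text asserts.

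As a self-contained alternative that also serves as a cross-check, I would derive the result directly: with $g_x$ Gamma-distributed of shape $\hat{m}$ and scale $1/\hat{m}$, \eqref{eqn:gamma_laplace} gives $\E(e^{-sg})=(1+s/\hat{m})^{-\hat{m}}$, a single term in place of the weighted sum used in Theorem~\ref{theorem:kappamushadowed_rayleigh}. Inserting this into \eqref{eqn:LI} and applying the same ${}_2F_1$ integral identity yields $L_I(s)=e^{-\rho_I\delta_0(-1+{}_2F_1(-\frac{1}{\alpha_I},\hat{m},1-\frac{1}{\alpha_I},-\frac{s\gamma_0}{\hat{m}\delta_0^{\alpha_I}}))}$, and substituting into \eqref{eqn:Ps_LI} again produces \eqref{eqn:Ps_nakagami_desiredrayleigh}. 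Either route reduces to routine algebra, so the substitution argument of the first two paragraphs is the most economical.
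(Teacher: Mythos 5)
Your proposal is correct and matches the paper's own proof, which likewise obtains \eqref{eqn:Ps_nakagami_desiredrayleigh} by substituting $\mu=\hat{m}$, $\kappa=0$ into \eqref{eqn:Ps_kappamu} so that $(k,\theta)$ collapses to $(\hat{m},1/\hat{m})$. Your added observations---that the single-Gamma approximation is exact here because Nakagami-$m$ power is genuinely Gamma-distributed, and the direct Laplace-transform cross-check---are sound but go beyond what the paper records.
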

\begin{proof}
Nakagami-m fading of shape parameter $(\hat{m})$, scale parameter ($\frac{1}{\hat{m}}$) is a special case of $\kappa$-$\mu$ fading for $\mu=\hat{m}$, $\kappa=0$.  Substituting in \eqref{eqn:Ps_kappamu}, the ranging success probability in \eqref{eqn:Ps_nakagami_desiredrayleigh} is derived. 
\end{proof}
\subsection{Rayleigh fading}
Ranging success probability when interferers experience Rayleigh fading  is derived from \eqref{eqn:Ps_nakagami_desiredrayleigh} for $\hat{m}$=1.  For arbitrary inter-lane distance $L$, ranging success probability is derived next when interferers experience Rayleigh fading, for $\alpha_I=2$. 
\subsubsection{Ranging success probability for arbitrary L}
\begin{cor}
For arbitrary inter-lane distance L, $\alpha_d$, $\delta_0$, when desired channel experiences Rayleigh fading and interferers experience  Rayleigh fading, ranging success probability  $P_s$ is 
\begin{equation}
\sum\limits_{i=1}^N w_i e^{-\frac{T \sigma^2 R^{2 \alpha_d}}{ c_0 \overline{g_p} t_i}}   e^{-\rho_I \frac{\frac{ T R^{2 \alpha_d}\gamma_0}{c_0 t_i \overline{g_p}} ArcCot[\frac{\delta_0}{\sqrt{L^2+\frac{ T R^{2 \alpha_d}\gamma_0}{c_0 t_i \overline{g_p}} }}]}{\sqrt{L^2+\frac{ T R^{2 \alpha_d}\gamma_0}{c_0 t_i \overline{g_p}} }}  }
\label{eqn:Ps_rayleigh_desiredrayleigh_L}
\end{equation}
\label{cor:rayleigh_L}
\end{cor}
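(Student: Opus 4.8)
The plan is to reuse the Laplace-transform representation \eqref{eqn:Ps_LI} and evaluate the interference Laplace transform \eqref{eqn:LI} \emph{exactly} for the special parameter choice $\alpha_I=2$ with Rayleigh interferer fading, crucially \emph{without} invoking the $r\gg L$ approximation used in Theorem \ref{theorem:kappamushadowed_rayleigh}. Since Rayleigh fading is the unit-mean exponential, equivalently the Gamma$(1,1)$ instance of \eqref{eqn:gamma_laplace}, the fading expectation inside \eqref{eqn:LI} is $\E_g(e^{-s\gamma_0 g x^{-\alpha_I}})=(1+s\gamma_0 x^{-\alpha_I})^{-1}$. Setting $\alpha_I=2$ and substituting $x^2=r^2+L^2$ from \eqref{eqn:mindist}, the integrand collapses to the rational function $1-(1+s\gamma_0(r^2+L^2)^{-1})^{-1}=s\gamma_0/(r^2+L^2+s\gamma_0)$. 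This is the decisive simplification: unlike the general $\alpha_I$ case it no longer forces a hypergeometric evaluation, and the term $L^2$ can be retained exactly rather than approximated away.

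Next I would carry out the elementary integral $\int_{\delta_0}^{\infty} s\gamma_0/(r^2+L^2+s\gamma_0)\,dr$. Writing $a=\sqrt{L^2+s\gamma_0}$ and using $\int dr/(r^2+a^2)=a^{-1}\arctan(r/a)$, the upper limit contributes $\pi/2$ and the lower limit contributes $\arctan(\delta_0/a)$; combining them through the identity $\pi/2-\arctan(z)=\operatorname{arccot}(z)$ gives $(s\gamma_0/a)\operatorname{arccot}(\delta_0/a)$. Hence $L_I(s)=\exp(-\rho_I (s\gamma_0/\sqrt{L^2+s\gamma_0})\operatorname{arccot}(\delta_0/\sqrt{L^2+s\gamma_0}))$, where $\operatorname{arccot}$ is the $\mathrm{ArcCot}$ of the statement.

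Finally I would substitute $s=T R^{2\alpha_d}/(c_0 t_i \overline{g_p})$, so that $s\gamma_0=T R^{2\alpha_d}\gamma_0/(c_0 t_i \overline{g_p})$ matches both the numerator and the quantity under the square root of the target expression, and insert this $L_I(\cdot)$ together with the noise factor $e^{-T\sigma^2 R^{2\alpha_d}/(c_0 \overline{g_p} t_i)}$ into the weighted sum \eqref{eqn:Ps_LI}. This reproduces \eqref{eqn:Ps_rayleigh_desiredrayleigh_L} term by term.

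The only genuinely non-routine step is recognizing that the choice $\alpha_I=2$ is precisely what renders the interference integral elementary, permitting $L$ to remain arbitrary; everything else is the standard exponential Laplace transform together with a table integral. I therefore expect no real obstacle beyond bookkeeping the boundary terms of the arctangent correctly to obtain the arccotangent.
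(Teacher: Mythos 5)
Your proposal is correct and follows essentially the same route as the paper: both start from the Laplace-transform representation \eqref{eqn:Ps_LI}, use the exponential (Gamma$(1,1)$) moment generating function to reduce the integrand in \eqref{eqn:LI} to $s\gamma_0/(r^2+L^2+s\gamma_0)$, and evaluate the resulting elementary integral to obtain the $\operatorname{arccot}$ form of $L_I(s)$ before substituting $s=TR^{2\alpha_d}/(c_0 t_i \overline{g_p})$. Your write-up merely makes explicit the arctangent boundary-term bookkeeping that the paper leaves implicit.
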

\begin{proof}
To derive ranging success probability, $L_I(s)$ is first computed. From \eqref{eqn:LI}
\begin{equation*}
L_I(s) =\exp \left(-\rho_I \int\limits_{\delta_0}^{\infty} (1-\E_g(e^{-s \gamma_0 g (r^2+L^2)^{-\alpha/2}})) d r  \right)
\end{equation*}
As the interferers are Rayleigh distributed of unit mean, and $\alpha_I$=2
\begin{align}
L_I(s) &=\exp \left(-\rho_I \int\limits_{\delta_0}^{\infty} (1-\frac{1}{1+s \gamma_0 (r^2+L^2)^{-1}}) d r  \right)  \nonumber \\
 &= \exp \left(-\rho_I \frac{s \gamma_0 ArcCot(\frac{\delta_0}{\sqrt{L^2+s \gamma_0}})}{\sqrt{L^2+s \gamma_0}}  \right) 
 \label{eqn:LI_rayleigh}
\end{align}
Substituting \eqref{eqn:LI_rayleigh} in \eqref{eqn:Ps_LI}, for $\alpha_I$=2, $P_s$ in \eqref{eqn:Ps_rayleigh_desiredrayleigh_L} is derived.
\end{proof}

\section{Ranging success probability when desired channel experiences no-fading }
In this Section, we derive ranging success probability when the desired channel experiences no-fading and interferers experience $\kappa$-$\mu$ shadowed fading. Using this result, ranging success probability is derived for special cases such as Rayleigh, Nakagami, Rician, $\kappa$-$\mu$ fading etc. This is a significant improvement over \cite{radar} in which closed form results were derived only for no-fading in interferers.  The difference from analysis in above Section is that, since the desired channel experiences no-fading, to derive the ranging success probability, Laplace transform of interference is not sufficient. Instead, CDF of interference will be required which restricts the parameters to be $\alpha_I=2$, $\delta_0=L=0$, as closed form expression for CDF of interference is not known for arbitrary parameters. 
\subsection{$\kappa-\mu$ shadowed fading interferer channels}
Similar to discussion in previous Section, first we derive accurate ranging success probability and then a simpler approximation is derived in Corollary \ref{cor:Ps_kappamushadowed_approximate}.
\begin{theorem}
When interferers experience $\kappa$-$\mu$ shadowed fading and desired channel experiences no-fading, for arbitrary $\alpha_d$, ranging success probability is
\begin{equation}
P_s=  Erfc \left(\frac{\rho_I \sqrt{\pi \gamma_0}  \sum\limits_{l=0}^{\infty} w_{Il} \frac{ \Gamma(0.5+l+\mu)}{\Gamma(l+\mu)} }{2 \sqrt{\mu(1+\kappa) (\frac{c_0 R^{-2 \alpha_d}}{T} -\sigma^2)}} \right) 
\label{eqn:Ps_kappamushadowedfading}
\end{equation}
$w_{Il}=\frac{ \Gamma(m+l) (\mu \kappa)^l m^m}{ (\mu \kappa +m)^{m+l} \Gamma(m) l! }$.
\label{thm:kappamushadowinterferer}
\end{theorem}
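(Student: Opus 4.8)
The plan is to exploit the fact that with no fading in the desired channel the signal power in \eqref{eqn:signalpower} collapses to the deterministic constant $S = c_0 R^{-2\alpha_d}$ (set $g_p=1$). Starting from \eqref{eqn:Ps_gp}, the ranging success probability then reads
\[ P_s = \P\!\left(I < \frac{c_0 R^{-2\alpha_d}}{T} - \sigma^2\right) = F_I\!\left(\frac{c_0 R^{-2\alpha_d}}{T} - \sigma^2\right), \]
so the entire problem reduces to finding the CDF of the aggregate interference $I$. Unlike the Rayleigh desired case, the Laplace transform alone no longer suffices; I need the full distribution of $I$, which is exactly why the analysis is restricted to $\alpha_I = 2$, $\delta_0 = L = 0$ as flagged in the section preamble.

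Next I would compute $L_I(s)$ from \eqref{eqn:LI} with these parameters, i.e. $x = r$ and lower limit $0$. Swapping the fading expectation with the spatial integral, the inner integral $\int_0^\infty (1 - e^{-s\gamma_0 g r^{-2}})\,dr$ is evaluated by the substitution $v = s\gamma_0 g r^{-2}$, reducing it to $\tfrac{1}{2}\sqrt{s\gamma_0 g}\int_0^\infty (1-e^{-v})v^{-3/2}\,dv = \sqrt{\pi s \gamma_0 g}$, using $\int_0^\infty (1-e^{-v})v^{-3/2}\,dv = -\Gamma(-1/2) = 2\sqrt{\pi}$. Taking $\E_g$ and pulling out $\sqrt{g}$ then yields the clean form $L_I(s) = \exp(-\rho_I\sqrt{\pi\gamma_0}\,\E[\sqrt{g}]\,\sqrt{s})$.

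The key analytical step, and the main obstacle, is recognizing that a Laplace transform of the form $e^{-b\sqrt{s}}$ is that of a one-sided $1/2$-stable (L\'evy) law, whose CDF admits the closed form $F_I(x) = \erfc\!\big(b/(2\sqrt{x})\big)$ with $b = \rho_I\sqrt{\pi\gamma_0}\,\E[\sqrt{g}]$. This identification avoids any numerical Laplace inversion and is precisely what makes the $\alpha_I=2$, $\delta_0=L=0$ case tractable in closed form; everything downstream is bookkeeping.

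Finally I would evaluate the fractional moment $\E[\sqrt{g}]$ using the Gamma-mixture representation $f_{g_x}(x)=\sum_{l=0}^\infty w_{Il}\,\frac{e^{-cx}x^{l+\mu-1}c^{l+\mu}}{\Gamma(l+\mu)}$ with $c=\mu(1+\kappa)$ established earlier in the section. Since a $\mathrm{Gamma}(k,\vartheta)$ variable has $\E[X^{1/2}] = \vartheta^{1/2}\Gamma(k+1/2)/\Gamma(k)$, the mixture gives $\E[\sqrt{g}] = (\mu(1+\kappa))^{-1/2}\sum_l w_{Il}\,\Gamma(l+\mu+1/2)/\Gamma(l+\mu)$. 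Substituting this $b$ together with $x = c_0 R^{-2\alpha_d}/T - \sigma^2$ into the erfc CDF reproduces \eqref{eqn:Ps_kappamushadowedfading}. The only remaining routine check is that the interchange of sum, fading expectation and spatial integral is justified, which follows from the nonnegativity of all integrands.
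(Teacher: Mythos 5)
Your proposal is correct and reaches exactly the paper's expression, but it reorders the two expectations in a way that makes it a genuinely different derivation. The paper works entirely inside the Gamma-mixture representation: it takes the fading expectation first (each mixture component contributing a factor $(1+\tfrac{s\gamma_0}{c}r^{-2})^{-(l+\mu)}$, i.e.\ it reuses \eqref{eqn:LI_kappamushadowed_0} from Theorem~\ref{theorem:kappamushadowed_rayleigh}), and then evaluates the spatial integral term-by-term with the identity
\begin{equation*}
\int_0^\infty \Bigl(1-\frac{1}{(1+s r^{-2})^{m}}\Bigr)\,\d r=\sqrt{\pi s}\,\frac{\Gamma(m+0.5)}{\Gamma(m)},
\end{equation*}
before applying the L\'evy-distribution inversion. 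You instead condition on the fading power $g$, do the spatial integral first via the elementary substitution $v=s\gamma_0 g r^{-2}$ (your evaluation $\int_0^\infty(1-e^{-v})v^{-3/2}\,\d v=2\sqrt{\pi}$ is correct), and only then average over $g$, which exposes the structural fact that for $\alpha_I=2$, $\delta_0=L=0$ the interference Laplace transform is $\exp\bigl(-\rho_I\sqrt{\pi\gamma_0 s}\,\E[\sqrt{g}]\bigr)$ for \emph{any} fading law with finite half-moment; the Gamma mixture is then needed only to evaluate the fractional moment $\E[\sqrt{g}]=c^{-1/2}\sum_l w_{Il}\,\Gamma(l+\mu+0.5)/\Gamma(l+\mu)$, which indeed reproduces the sum in \eqref{eqn:Ps_kappamushadowedfading}. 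Both routes meet at the same $L_I(s)$ and the same L\'evy CDF step, so the proofs are equally rigorous; what yours buys is generality and transparency (the theorem is seen to depend on the interferer fading only through $\E[\sqrt{g}]$, so any other fading model could be plugged in immediately), while the paper's buys economy, since it recycles the already-derived mixture Laplace transform and a single integral identity shared with the Rayleigh-desired case.
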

\begin{proof}

For the special case of $\alpha_I$=2, $L$=0, $\delta_0=0$ as considered in \cite{radar}, from \eqref{eqn:LI_kappamushadowed_0}
\begin{equation*}
L_I(s) =e^{-\rho_I\sum\limits_{l=0}^{\infty} w_{Il} \int\limits_{0}^{\infty} (\frac{(1+\frac{s \gamma_0}{c}r^{-2})^{l+\mu}-1}{(1+\frac{s \gamma_0}{c}r^{-2})^{l+\mu}}) d r}
\end{equation*}
Using the identity 
\begin{equation}
\int_0^{\infty} (1-\frac{1}{(1+s r^{-2})^m}) d r = \sqrt{\pi s} \frac{\Gamma(0.5+m)}{\Gamma(m)},
\label{eqn:identity0}
\end{equation}
\begin{equation*}
L_I(s) =\exp \left(-\rho_I \sqrt{s \gamma_0 \pi/c} \sum\limits_{l=0}^{\infty} w_{Il} \frac{ \Gamma(0.5+l+\mu)}{\Gamma(l+\mu)} \right)
\end{equation*}
Use Levy's distribution [(3.23) in \cite{gantinow}], i.e. if $L_I(s)=exp(-\sqrt{2as})$ then $F(x)=Erfc(\sqrt{\frac{a}{2x}})$. Hence , the CDF of interference is
\begin{equation}
 F_I(x)= Erfc \left(\frac{\rho_I \sqrt{\pi \gamma_0}  \sum\limits_{l=0}^{\infty} w_{Il} \frac{ \Gamma(0.5+l+\mu)}{\Gamma(l+\mu)} }{2 \sqrt{cx}} \right)
 \label{eqn:FI}
 \end{equation}
 From \eqref{eqn:Ps},
 \begin{equation}
     P_s=F_I \left(\frac{S}{T}-\sigma^2 \right)
     \label{eqn:Ps_FI}
 \end{equation}
 Therefore ranging success probability in \eqref{eqn:Ps_kappamushadowedfading} is obtained by combining \eqref{eqn:FI}, \eqref{eqn:signalpower}, \eqref{eqn:Ps_FI}, for $g_p$=1.   
\end{proof}
\begin{cor}
When interferers experience $\kappa$-$\mu$ shadowed fading and desired channel experiences no-fading, approximate ranging success probability is
\begin{equation}
P_s=  Erfc \left(\frac{\rho_I \sqrt{\pi \gamma_0 \theta} \Gamma(0.5+k)}{2 \Gamma(k) \sqrt{\frac{c_0 R^{-2 \alpha_d}}{T}-\sigma^2} } \right)
\label{eqn:Ps_kappamushadowed_approximate}
\end{equation}
where $k$=$\frac{m \mu (1+\kappa)^2}{m+\mu \kappa^2 +2 m \kappa}$ , $\theta=\frac{1}{k}$.
\label{cor:Ps_kappamushadowed_approximate}
\end{cor}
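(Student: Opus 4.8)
The plan is to mirror the proof of Theorem~\ref{thm:kappamushadowinterferer}, but to replace the exact infinite-series representation of the $\kappa$-$\mu$ shadowed interferer fading power by the single-Gamma approximation of \cite{kumar2015approximate}, so that the entire weighted sum over $l$ collapses to one term. Since the desired channel experiences no fading, $g_p=1$ and $S=c_0 R^{-2\alpha_d}$, and from \eqref{eqn:Ps_FI} we have $P_s=F_I(S/T-\sigma^2)$. Hence the goal reduces to obtaining a closed-form CDF $F_I$ of the aggregate interference, which in turn is extracted from the Laplace transform $L_I(s)$ via the Levy inversion used in the theorem.

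First I would approximate the unit-mean interferer fading power $g_x$ by a Gamma random variable of shape $k=\frac{m\mu(1+\kappa)^2}{m+\mu\kappa^2+2m\kappa}$ and scale $\theta=1/k$, exactly as in Corollary~\ref{cor:kappamushadowed_rayleigh_approximate}. Substituting the single-Gamma Laplace transform \eqref{eqn:gamma_laplace} into the probability generating functional expression \eqref{eqn:LI} gives
\[
L_I(s)=\exp\left(-\rho_I\int_{\delta_0}^{\infty}\Big(1-\frac{1}{(1+s\gamma_0\theta(r^2+L^2)^{-\alpha_I/2})^k}\Big)\,dr\right).
\]
Because a closed-form CDF of interference is needed (the Laplace transform alone does not suffice when the desired channel is unfaded), I would specialise to $\alpha_I=2$, $L=0$, $\delta_0=0$, as in \cite{radar}. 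With these choices the integral identity \eqref{eqn:identity0}, applied with the single exponent $k$ and argument $s\gamma_0\theta$, yields
\[
L_I(s)=\exp\left(-\rho_I\sqrt{\pi s\gamma_0\theta}\,\frac{\Gamma(0.5+k)}{\Gamma(k)}\right),
\]
where the crucial simplification over the theorem is that no sum over $l$ survives.

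Next I would recognise this as the Levy form $L_I(s)=\exp(-\sqrt{2as})$ used in the theorem [(3.23) in \cite{gantinow}], with $\sqrt{2a}=\rho_I\sqrt{\pi\gamma_0\theta}\,\Gamma(0.5+k)/\Gamma(k)$, so that $F_I(x)=Erfc(\sqrt{a/(2x)})$ gives
\[
F_I(x)=Erfc\left(\frac{\rho_I\sqrt{\pi\gamma_0\theta}\,\Gamma(0.5+k)}{2\Gamma(k)\sqrt{x}}\right).
\]
Evaluating this at $x=S/T-\sigma^2=c_0 R^{-2\alpha_d}/T-\sigma^2$ via \eqref{eqn:Ps_FI} then produces exactly the claimed expression \eqref{eqn:Ps_kappamushadowed_approximate}.

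The calculation itself is routine once the approximation is invoked; the only real subtlety is the same structural constraint that forced $\alpha_I=2$, $\delta_0=L=0$ in Theorem~\ref{thm:kappamushadowinterferer}: those are precisely the parameters for which the $r$-integral collapses to a $\sqrt{s}$ dependence, letting $L_I(s)$ match the Levy template and thereby admit an explicit CDF. I expect the main point to verify is simply that the single-Gamma substitution is consistent with this template, i.e. that replacing $\sum_l w_{Il}\Gamma(0.5+l+\mu)/\Gamma(l+\mu)$ together with the factor $\sqrt{1/c}$ by the single-Gamma quantities $\Gamma(0.5+k)/\Gamma(k)$ and $\sqrt{\theta}$ leaves the $\sqrt{s}$ structure intact, which it does.
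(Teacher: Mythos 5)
Your proposal is correct and takes essentially the same route as the paper's proof: the paper likewise substitutes the single-Gamma approximation into the Laplace transform (its equation \eqref{eqn:LI_approx}), specialises to $\alpha_I=2$, $L=0$, $\delta_0=0$ via the identity \eqref{eqn:identity0}, applies the Levy inversion from Theorem~\ref{thm:kappamushadowinterferer}, and evaluates $F_I$ at $S/T-\sigma^2$ with $g_p=1$. No gaps to report.
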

\begin{proof}
For the special case of $\alpha_I$=2, $L$=0, $\delta_0=0$ as considered in \cite{radar}, substituting \eqref{eqn:identity0} in \eqref{eqn:LI_approx},
\[L_I(s)=e^{-\rho_I \sqrt{\pi s \gamma_0 \theta} \frac{\Gamma(0.5+k)}{\Gamma(k)}} \]
Using Levy's distribution as in Theorem \ref{thm:kappamushadowinterferer},
\begin{equation}
    F_I(x)=Erfc \left(\frac{\rho_I \sqrt{\pi \gamma_0 \theta} \Gamma(0.5+k)}{2 \sqrt{x} \Gamma(k)} \right)
    \label{eqn:FI_approximate}
\end{equation}
Therefore ranging success probability in \eqref{eqn:Ps_kappamushadowed_approximate} is obtained  by combining \eqref{eqn:FI_approximate}, \eqref{eqn:signalpower}, \eqref{eqn:Ps_FI}, for $g_p$=1.   
\end{proof}

\subsection{Rician shadowed fading interferer channels}
Ranging success probability is  obtained from \eqref{eqn:Ps_kappamushadowed_approximate} by substituting $\mu$=1, $\kappa$=K, where $K$ is the Rician shape parameter. 

Next, ranging success probability from \eqref{eqn:Ps_kappamushadowedfading} is derived when interferers experience $\kappa$-$\mu$ fading. 

\subsection{$\kappa$-$\mu$ faded interferer channels}
\begin{cor}
When interferers experience $\kappa$-$\mu$ fading, ranging success probability is
\begin{equation}
P_s=  Erfc \left(\frac{\rho_I \sqrt{\pi \gamma_0 \theta} \Gamma(0.5+k)}{2 \Gamma(k) \sqrt{\frac{c_0 R^{-2 \alpha_d}}{T}-\sigma^2} } \right)
\label{eqn:Ps_kappamufading}
\end{equation}
where  $k$=$\frac{ \mu (1+\kappa)^2}{1+2  \kappa}$ , $\theta=\frac{1}{k}$.
\label{cor:Ps_kappamu_approximate}
\end{cor}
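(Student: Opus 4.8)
The plan is to obtain this result as the $m \to \infty$ limit of the $\kappa$-$\mu$ shadowed fading result already established in Corollary \ref{cor:Ps_kappamushadowed_approximate}, exactly mirroring how Corollary \ref{cor:kappamu} was derived from \eqref{eqn:Ps_kappamushadow_rayleigh_approximate} in the Rayleigh-desired case. The key observation, stated earlier in the excerpt, is that $\kappa$-$\mu$ fading is precisely the special case of $\kappa$-$\mu$ shadowed fading in which the shadowing parameter $m$ tends to infinity. Since the entire derivation of \eqref{eqn:Ps_kappamushadowed_approximate} (namely the single-Gamma moment matching of \cite{kumar2015approximate}, the identity \eqref{eqn:identity0}, and Levy's distribution) carries over unchanged, only the effective Gamma parameters need to be re-evaluated in the limit.

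First I would start from \eqref{eqn:Ps_kappamushadowed_approximate}, whose functional form depends on the interferer fading only through the effective shape parameter $k$ and scale $\theta = 1/k$; the outer $Erfc$ structure, the factor $\sqrt{\pi \gamma_0 \theta}$, the ratio $\Gamma(0.5+k)/\Gamma(k)$, and the denominator $\sqrt{c_0 R^{-2\alpha_d}/T - \sigma^2}$ are all untouched by the $m \to \infty$ limit. Thus the only quantity that needs to be computed is $\lim_{m \to \infty} k$.

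Next I would carry out the elementary limit on $k = \frac{m \mu (1+\kappa)^2}{m + \mu \kappa^2 + 2 m \kappa}$ by dividing numerator and denominator by $m$, which gives $k = \frac{\mu (1+\kappa)^2}{1 + \mu\kappa^2/m + 2\kappa}$, so that letting $m \to \infty$ yields $k = \frac{\mu (1+\kappa)^2}{1+2\kappa}$ and correspondingly $\theta = 1/k$. Substituting these values back into \eqref{eqn:Ps_kappamushadowed_approximate} produces \eqref{eqn:Ps_kappamufading}, as claimed.

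There is essentially no technical obstacle here, since the derivation reduces to a single routine limit. The one point worth noting for rigor is that the moment-matched single-Gamma approximation used for $\kappa$-$\mu$ shadowed fading becomes the \emph{exact} single-Gamma representation of $\kappa$-$\mu$ fading in the $m \to \infty$ regime, so no approximation error beyond that already inherent in \eqref{eqn:Ps_kappamushadowed_approximate} is introduced by taking this limit.
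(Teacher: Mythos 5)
Your derivation is correct and follows exactly the paper's own route: the paper likewise obtains this corollary by letting $m \to \infty$ in \eqref{eqn:Ps_kappamushadowed_approximate}, computing the limit of $k$ precisely as you do (dividing through by $m$) and setting $\theta = 1/k$, mirroring the proof of Corollary \ref{cor:kappamu}.

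One caveat on your closing remark: it is not true that the moment-matched single-Gamma representation becomes \emph{exact} as $m \to \infty$. The power of a $\kappa$-$\mu$ faded channel with $\kappa > 0$ is not Gamma distributed (its PDF involves a modified Bessel function $I_{\mu-1}$, as for Rician power when $\mu = 1$), so the result remains a moment-matching approximation; this is consistent with the paper's own statement that only the Nakagami, Rayleigh and no-fading special cases yield exact expressions. This does not affect the validity of your derivation, since the corollary inherits whatever approximation error is already present in \eqref{eqn:Ps_kappamushadowed_approximate}.
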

\begin{proof}
$\kappa$-$\mu$ fading is a special case of $\kappa$-$\mu$ shadowed fading when shadowing parameter $m \rightarrow \infty$. Ranging success probability is derived from \eqref{eqn:Ps_kappamushadowed_approximate} for  $m \rightarrow \infty$.
Follows the same steps as the proof in Corollary \ref{cor:kappamu}, to derive ranging success probability.
\end{proof}
\subsection{Rician faded interferer channels}
Ranging success probability is obtained from \eqref{eqn:Ps_kappamufading} by substituting $\mu$=1, $\kappa$=K, where $K$ is the Rician shape parameter. 

Ranging success probability for Nakagami, Rayleigh and no fading cases in Corollary \ref{cor:nakagami_nodesired} - \ref{cor:nofading_nodesired} are exact expressions and can be derived from either \eqref{eqn:Ps_kappamushadowedfading} or \eqref{eqn:Ps_kappamushadowed_approximate}. 

\subsection{Nakagami-m faded interferer channels}
\begin{cor}
Ranging success probability when interferers experience Nakagami-$m$ fading of shape parameter $(\hat{m})$, scale parameter ($\frac{1}{\hat{m}}$) is
\begin{equation}
P_s=  Erfc \left(\frac{\rho_I \sqrt{\pi \gamma_0}   \Gamma(0.5+\hat{m}) }{2 \Gamma(\hat{m}) \sqrt{\hat{m} (\frac{c_0 R^{-2 \alpha_d}}{T} -\sigma^2)}} \right) 
\label{eqn:Ps_nakagami}
\end{equation}
\label{cor:nakagami_nodesired}
\end{cor}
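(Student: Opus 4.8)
The plan is to obtain the Nakagami-$m$ expression as a degenerate case of the $\kappa$-$\mu$ faded result already established in Corollary~\ref{cor:Ps_kappamu_approximate}, exactly paralleling how the Rayleigh-desired Nakagami case in Corollary~\ref{cor:rayleighdesired_nakagami} was obtained. The key observation is that Nakagami-$m$ fading with shape parameter $\hat{m}$ and scale parameter $1/\hat{m}$ is recovered from $\kappa$-$\mu$ fading by setting $\mu=\hat{m}$ and $\kappa=0$. Since \eqref{eqn:Ps_kappamufading} already gives $P_s$ for $\kappa$-$\mu$ faded interferers under a no-fading desired channel, it suffices to substitute these parameter values into that expression and simplify; no new integral or transform has to be evaluated.

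First I would evaluate the shape parameter $k=\frac{\mu(1+\kappa)^2}{1+2\kappa}$ appearing in \eqref{eqn:Ps_kappamufading} at $\mu=\hat{m}$, $\kappa=0$, which collapses to $k=\hat{m}$, and correspondingly $\theta=1/k=1/\hat{m}$. Substituting these into the argument of the complementary error function in \eqref{eqn:Ps_kappamufading} replaces $\sqrt{\pi\gamma_0\theta}$ by $\sqrt{\pi\gamma_0/\hat{m}}$ and the ratio $\Gamma(0.5+k)/\Gamma(k)$ by $\Gamma(0.5+\hat{m})/\Gamma(\hat{m})$. The only remaining manipulation is algebraic: the factor $\sqrt{\theta}=1/\sqrt{\hat{m}}$ from the numerator is moved into the denominator using $\sqrt{\hat{m}}\,\sqrt{\frac{c_0 R^{-2\alpha_d}}{T}-\sigma^2}=\sqrt{\hat{m}\left(\frac{c_0 R^{-2\alpha_d}}{T}-\sigma^2\right)}$, which reproduces exactly \eqref{eqn:Ps_nakagami}.

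As an independent consistency check, the same result can be derived directly from the exact $\kappa$-$\mu$ shadowed expression \eqref{eqn:Ps_kappamushadowedfading} by setting $\kappa=0$ at the outset, without passing through the $m\to\infty$ limit. With $\kappa=0$ the weights $w_{Il}=\frac{\Gamma(m+l)(\mu\kappa)^l m^m}{(\mu\kappa+m)^{m+l}\Gamma(m)\,l!}$ vanish for every $l\ge 1$ because of the factor $(\mu\kappa)^l$, while $w_{I0}=1$, so the infinite sum $\sum_l w_{Il}\frac{\Gamma(0.5+l+\mu)}{\Gamma(l+\mu)}$ collapses to the single term $\frac{\Gamma(0.5+\mu)}{\Gamma(\mu)}$; combined with $\mu(1+\kappa)=\mu$ and the identification $\mu=\hat{m}$ this again yields \eqref{eqn:Ps_nakagami}, confirming that the expression is exact rather than approximate.

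There is essentially no analytic obstacle here, since the heavy lifting — the Gaussian–Laguerre representation, the probability generating functional, and Levy's distribution — was already carried out in Theorem~\ref{thm:kappamushadowinterferer} and Corollary~\ref{cor:Ps_kappamu_approximate}. The only point requiring care is the bookkeeping of the $\sqrt{\theta}=1/\sqrt{\hat{m}}$ factor: it must be absorbed under the square root in the denominator so that the final form matches the grouping in \eqref{eqn:Ps_nakagami}, rather than being left in the numerator as an equivalent but differently written expression.
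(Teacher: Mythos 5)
Your proposal is correct and follows essentially the same route as the paper: the paper's proof is precisely the one-line substitution $\mu=\hat{m}$, $\kappa=0$ into \eqref{eqn:Ps_kappamufading}, giving $k=\hat{m}$, $\theta=1/\hat{m}$ and absorbing $\sqrt{\theta}$ into the denominator. Your additional check from the exact expression \eqref{eqn:Ps_kappamushadowedfading} (where $\kappa=0$ collapses the sum to $w_{I0}=1$) is a nice confirmation of the paper's remark that these special cases are exact and derivable from either formula, but it does not change the argument.
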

\begin{proof}
Ranging success probability is derived from \eqref{eqn:Ps_kappamufading} for $\mu=\hat{m}$, $\kappa=0$.
\end{proof}
\subsection{Rayleigh faded interferer channels}
\begin{cor}
When interferers experience Rayleigh fading, ranging success probability is  
\begin{equation}
    P_s=  Erfc \left(\frac{\rho_I \sqrt{ \gamma_0}   \pi }{4  \sqrt{ (\frac{c_0 R^{-2 \alpha_d}}{T} -\sigma^2)}} \right)  
    \label{eqn:Ps_rayleigh}
\end{equation}
This is derived from \eqref{eqn:Ps_nakagami} by substituting $\hat{m}=1$.
\end{cor}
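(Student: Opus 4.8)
The plan is to obtain \eqref{eqn:Ps_rayleigh} as a direct specialization of the Nakagami-$m$ expression \eqref{eqn:Ps_nakagami}, exploiting the fact that Rayleigh fading is precisely Nakagami-$m$ fading with shape parameter $\hat{m}=1$ and scale $1/\hat{m}=1$, hence unit mean. Since the interferer fading power $g_x$ was normalized to unit mean throughout the derivation of \eqref{eqn:Ps_nakagami}, no rescaling of the channel is needed, and I can simply set $\hat{m}=1$ in that formula and simplify.

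The only computation required is the evaluation of the two Gamma factors at $\hat{m}=1$. In the denominator, $\Gamma(\hat{m})=\Gamma(1)=1$ and $\sqrt{\hat{m}\,(\cdots)}=\sqrt{(\cdots)}$, where $(\cdots)$ abbreviates $\frac{c_0 R^{-2\alpha_d}}{T}-\sigma^2$. In the numerator, the half-integer value $\Gamma(0.5+\hat{m})=\Gamma(3/2)=\tfrac{1}{2}\sqrt{\pi}$, which follows from $\Gamma(1/2)=\sqrt{\pi}$ together with the recurrence $\Gamma(z+1)=z\,\Gamma(z)$. Substituting these into \eqref{eqn:Ps_nakagami}, the numerator becomes $\rho_I\sqrt{\pi\gamma_0}\cdot\tfrac{1}{2}\sqrt{\pi}=\tfrac{1}{2}\rho_I\pi\sqrt{\gamma_0}$ and the denominator becomes $2\sqrt{(\cdots)}$, so the argument of $Erfc$ collapses to $\frac{\rho_I\pi\sqrt{\gamma_0}}{4\sqrt{(\cdots)}}$, which is exactly \eqref{eqn:Ps_rayleigh}.

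There is no genuine obstacle in this corollary: its entire content is the parameter identification $\hat{m}=1$ combined with the standard half-integer Gamma evaluation, so the proof reduces to a one-line substitution. The only point that warrants a moment's care is bookkeeping of the constants, namely that the $\sqrt{\pi}$ produced by $\Gamma(3/2)$ combines with the $\sqrt{\pi}$ already present in the numerator of \eqref{eqn:Ps_nakagami} to yield the factor $\pi$, and that the factor $2$ from $\Gamma(3/2)$ multiplies the leading $2$ of the denominator to produce the $4$ that appears in \eqref{eqn:Ps_rayleigh}.
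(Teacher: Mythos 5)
Your proposal is correct and follows exactly the route the paper takes: the paper's own justification is the one-line remark that \eqref{eqn:Ps_rayleigh} follows from \eqref{eqn:Ps_nakagami} by setting $\hat{m}=1$, and your evaluation $\Gamma(3/2)=\tfrac{1}{2}\sqrt{\pi}$, $\Gamma(1)=1$ correctly produces the factor $\pi$ in the numerator and the $4$ in the denominator.
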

\subsection{No fading in interferer channels}
In this Section we derive ranging success probability for no fading as a special case, which is verified with equation (33) derived in \cite{radar}. 
\begin{cor}
When interferers experience no fading, ranging success probability is
\begin{equation}
P_s=  Erfc \left(\frac{\rho_I \sqrt{\pi \gamma_0}    }{2  \sqrt{ (\frac{c_0 R^{-2 \alpha_d}}{T} -\sigma^2)}} \right) 
\label{eqn:Ps_nofading}
\end{equation}
\label{cor:nofading_nodesired}
\end{cor}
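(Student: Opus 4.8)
The plan is to obtain the no-fading interferer result as the deterministic limit of the Nakagami-$m$ case already established in Corollary \ref{cor:nakagami_nodesired}. When the interferer channels experience no fading, the fading powers $g_x$ collapse to their unit mean; this is precisely the $\hat{m}\to\infty$ limit of a Nakagami-$m$ variable of shape $\hat{m}$ and scale $1/\hat{m}$, whose variance $1/\hat{m}$ vanishes. I would therefore start from the exact expression \eqref{eqn:Ps_nakagami} and let $\hat{m}\to\infty$.

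Concretely, the only $\hat{m}$-dependence in \eqref{eqn:Ps_nakagami} lives in the factor
\[
\frac{\Gamma(0.5+\hat{m})}{\Gamma(\hat{m})\sqrt{\hat{m}}},
\]
so the entire argument of $Erfc(\cdot)$ collapses to $\frac{\rho_I\sqrt{\pi\gamma_0}}{2\sqrt{c_0 R^{-2\alpha_d}/T-\sigma^2}}$ once this factor is shown to converge to $1$. The key analytic step is thus the gamma-ratio asymptotic $\Gamma(\hat{m}+0.5)/\Gamma(\hat{m})\sim\hat{m}^{1/2}$ as $\hat{m}\to\infty$, a consequence of the standard expansion $\Gamma(z+a)/\Gamma(z+b)\sim z^{a-b}$ (equivalently Stirling's formula). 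Substituting this limit into \eqref{eqn:Ps_nakagami} yields \eqref{eqn:Ps_nofading}.

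As an independent cross-check — and the route I would prefer to present, since it sidesteps the limiting argument — one can derive \eqref{eqn:Ps_nofading} directly from the Laplace transform. With $g_x\equiv1$ deterministic, $\E_g(e^{-s\gamma_0 g x^{-\alpha_I}})=e^{-s\gamma_0 x^{-\alpha_I}}$, so for the special case $\alpha_I=2$, $L=\delta_0=0$ considered here, expression \eqref{eqn:LI} becomes $L_I(s)=\exp(-\rho_I\int_0^\infty(1-e^{-s\gamma_0 r^{-2}})\,dr)$. The substitution $t=s\gamma_0/r^2$ followed by an integration by parts reduces the integral to $\Gamma(1/2)=\sqrt{\pi}$, giving $L_I(s)=\exp(-\rho_I\sqrt{\pi s\gamma_0})$. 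Applying Levy's distribution exactly as in the proof of Theorem \ref{thm:kappamushadowinterferer} then gives $F_I(x)=Erfc(\rho_I\sqrt{\pi\gamma_0}/(2\sqrt{x}))$, and combining this with \eqref{eqn:Ps_FI} and \eqref{eqn:signalpower} at $g_p=1$ reproduces \eqref{eqn:Ps_nofading}.

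The main obstacle, on the limiting route, is purely the gamma-function asymptotic; everything else is substitution. On the direct route the only mild technicalities are justifying convergence of the improper integral near $r=0$ and the vanishing of the boundary term in the integration by parts, both of which are routine. Either way the result also serves as the promised consistency check against equation (33) of \cite{radar}.
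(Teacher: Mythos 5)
Your first route coincides exactly with the paper's own proof: the paper obtains \eqref{eqn:Ps_nofading} by letting $\hat{m}\to\infty$ in \eqref{eqn:Ps_nakagami} and invoking the ratio asymptotic $\Gamma(n+a)/\Gamma(n+b)\approx n^{a-b}$ to conclude $\Gamma(0.5+\hat{m})/(\sqrt{\hat{m}}\,\Gamma(\hat{m}))\to 1$, then noting agreement with equation (33) of \cite{radar}. Your second, preferred route --- setting $g_x\equiv 1$ in \eqref{eqn:LI}, evaluating $\int_0^\infty\bigl(1-e^{-s\gamma_0 r^{-2}}\bigr)\,dr=\sqrt{\pi s\gamma_0}$ so that $L_I(s)=\exp\bigl(-\rho_I\sqrt{\pi s\gamma_0}\bigr)$, then applying Levy's distribution and \eqref{eqn:Ps_FI} with \eqref{eqn:signalpower} at $g_p=1$ exactly as in Theorem \ref{thm:kappamushadowinterferer} --- is a genuinely different and equally valid derivation; your integral evaluation and the resulting $F_I(x)=Erfc\bigl(\rho_I\sqrt{\pi\gamma_0}/(2\sqrt{x})\bigr)$ are correct. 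Comparing the two: the paper's limiting route costs only a gamma-ratio asymptotic and exhibits the no-fading formula as the endpoint of the Nakagami family in \eqref{eqn:Ps_nakagami}, but it implicitly interchanges the limit $\hat{m}\to\infty$ with the probability (assuming the limit of the success probabilities equals the success probability under the limiting deterministic channel), a step justified in neither the paper nor your first route beyond the vanishing-variance remark. Your direct route is self-contained, avoids that interchange entirely, and serves as an independent consistency check on the chain \eqref{eqn:identity0}--\eqref{eqn:Ps_nakagami}; it is arguably the cleaner proof to present, with the paper's route retained as a one-line corollary confirming internal consistency.
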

\begin{proof}
Interferers experience no fading when $\hat{m} \rightarrow \infty$. 
Using the identity \cite{sudharsanevm}, for large n
\[  \frac{\Gamma(n+a)}{\Gamma(n+b)} = \frac{1+\frac{(a-b)(a+b-1) +O(\frac{1}{n^2})}{2n}}{n^{b-a}},  \]
Using the above identity, as $\hat{m} \rightarrow \infty$, 
\begin{equation}
    \frac{\Gamma(0.5+\hat{m})}{\sqrt{\hat{m}}\Gamma(\hat{m})} \rightarrow 1.
    \label{eqn:gammaidentity}
\end{equation}
Substituting \eqref{eqn:gammaidentity} in \eqref{eqn:Ps_nakagami}, ranging success probability in \eqref{eqn:Ps_nofading} is derived, and is same as equation (33) in \cite{radar}.
\end{proof}
\section{Results}
The plots below are for $\alpha_d$=$\alpha_I$ unless stated otherwise. The parameters used for all these plots unless stated otherwise is unity for distance to target (R), $\gamma_0$, $\gamma_1$, interferer density ($\rho_I$) and path loss exponent is 4. In plots Fig. 6-Fig. 9, path loss exponents used are $\alpha_d$=4, $\alpha_I$=2. In Fig. \ref{fig:nakagami_rayleighdesired}, ranging success probability is plotted for interference limited system  (no noise), when desired channels experience Rayleigh fading and the interferers experience Nakagami/Rayleigh fading. The theoretical results derived in Corollary \ref{cor:rayleighdesired_nakagami}  are verified with simulation. This plot is for number of weights $N$=10 and  shows that it is sufficient for a wide range of mean of the desired channel power ($\overline{g_p}$) from -10 dB to 20 dB. With increase in $\hat{m}$ i.e. as interfering channel becomes stronger with lesser fading, $P_s$ decreases. For rest of plots, unit mean desired channel power  ($\overline{g_p}$) is assumed.

In Fig. \ref{fig:kappamushadowed_rayleighdesired},  ranging success probability is plotted for interference limited system, when desired channels experience Rayleigh fading and the interferers experience $\kappa$-$\mu$ shadowed fading. The plot is for $m$=1 (heavy shadowing) case. For different combinations of $\kappa$, $\mu$ we show that using the approximation of  $\kappa$-$\mu$ shadowed fading with a single Gamma distribution matches well with simulation. The analytical expression in Corollary \ref{cor:kappamushadowed_rayleigh_approximate} is verified in this plot. 

In Fig. \ref{fig:kappamu_rayleighdesired},  ranging success probability is plotted for interference limited system, when desired channels experience Rayleigh fading and the interferers experience $\kappa$-$\mu$ fading. The plot is for $m$=$\infty$ (no shadowing) case. For different combinations of $\kappa$, $\mu$ we show that using the approximation of  $\kappa$-$\mu$ shadowed fading with a single Gamma distribution matches well with simulation.  The analytical expression in Corollary \ref{cor:kappamu} is verified in this plot. For Rician fading ($\mu$=1) as the Rician factor $\kappa$ in interferer increases from 0 to 20, $P_s$ decreases , as the interferer becomes stronger with increase in LOS component $\kappa$. 

In Fig. \ref{fig:impactofkappamum}, the impact of parameters $\kappa$, $\mu$ and $m$ on ranging success probability is plotted when interferers experience $\kappa$-$\mu$ shadowed fading.  In Fig. \ref{fig:kappamu_rayleighdesired} it was shown that when $m>\mu$ i.e. $m=\infty$, $\mu=1$, with increase in $\kappa$, $P_s$ decreases. Similar effect is observed here too when $m>\mu$. When $m<\mu$,  with increase in $\kappa$, we observe increase in $P_s$. And when $m=\mu$, change in $\kappa$ does not affect the ranging success probability. So when there is no shadowing ($m=\infty$), m will be always greater than $\mu$ and with increase in LOS component $\kappa$ in interferer, $P_s$ decreases. Whereas when there is full shadowing  ($m$=0.5), m will be always lesser than number of clusters $\mu$ and with increase in LOS component $\kappa$ in interferer, $P_s$ increases. So the effect of LOS component $\kappa$ on $P_s$ depends on the relationship between shadowing component (m) and number of clusters ($\mu$). To the best of our knowledge, this is the first time such an observation has been made in literature. 

In Fig. \ref{fig:rayleigh_rayleigh_L}, when the longitudinal distance is $\delta_0$=0, the impact of increase in inter-lane distance $L$ is more profound than for $\delta_0=2$. So for planners, if the longitudinal distance is very less, then to increase the ranging success probability, increasing the longitudinal distance will be a good solution. Similarly when the inter-lane distance is $L$=0, increasing the longitudinal distance $\delta_0$ has more profound impact than doing the same when inter-lane distance is $L$=2. So in highly dense cities with narrow roadways, where $L$ is very low, increasing $\delta_0$ will help in a significant increase of ranging success probability. In this plot, expression in Corollary \ref{cor:rayleigh_L} is also verified with simulation. 

In Fig. \ref{fig:may4}, ranging success probability is plotted for different interfering fading channels when desired channel experiences no fading. Ranging success probability when interferer experiences $\kappa$-$\mu$ shadowed fading (Corollary \ref{cor:Ps_kappamushadowed_approximate})  is verified by simulation. Ranging success probability when interferer experiences $\kappa$-$\mu$ fading ($m \rightarrow \infty$ Corollary \ref{cor:Ps_kappamu_approximate} ) is verified by simulation. Similarly ranging success probability (Cor. \ref{cor:nakagami_nodesired}) when interferers experience Nakagami-m fading (m $\rightarrow \infty$, $\kappa$=0, $\mu=\hat{m}$) is verified by simulation.  We also compare the ranging success probability when desired channel experiences Rayleigh fading to desired channel not experiencing fading. There is a significant decrease in ranging success probability when desired channel experiences Rayleigh fading. When desired  channel experiences fading, and we design the system assuming there is no fading, will lead to  aggressive design resulting in  significant errors. This implies that to achieve a certain ranging success probability, we will design the system assuming it will require higher SINR threshold than what will be actually needed. This shows the importance of the expressions derived in this paper for desired channel being Rayleigh faded.

In Fig. \ref{fig:noise}, ranging success probability is plotted for a system affected by both interference and additive white Gaussian noise (0 dB). With increase in $\mu$ (number of clusters), interference becomes stronger thus reducing the ranging success probability. With decrease in $m$ (increasing shadowing), interference weakens thus resulting in increase of ranging success probability. The parameter $\gamma_0$ is present in both the signal and interference component. In an interference limited system, change in $\gamma_0$ does not impact $P_s$. In system with AWGN noise, change in $\gamma_0$ impacts $P_s$. We observe that with increase in parameter $\gamma_0$, ranging success probability increases.  The parameters used are $\alpha$=2, $\gamma_1=2$, $R$=1, $\rho_I=2$.

In Fig. \ref{fig:nofading}, when desired channel experiences no fading,  ranging success probability is plotted when interferers experience  Rayleigh fading \eqref{eqn:Ps_rayleigh} and no-fading \eqref{eqn:Ps_nofading} along with simulation results.  When interferers experience no-fading i.e. when interferers are stronger, $P_s$ is lesser than the case when interferers experience Rayleigh fading. When interferers experience fading, and we design the system assuming there is no fading in interferers, will lead to  conservative design. This implies that to achieve a certain ranging success probability, we will design the system assuming it will require lower SINR threshold than what will be actually needed. This shows the importance of the expressions derived in this paper for interferer being Rayleigh faded \eqref{eqn:Ps_rayleigh}.

In Fig. \ref{fig:may10}, ranging success probability is plotted for a constant path loss exponent of desired signal $\alpha_d$=3.  When longitudinal distance $\delta_0$ is very low (0), and path loss exponent of interferer ($\alpha_I$) increases, then ranging success probability decreases, due to increase in interference power. Interference power increases with increase in $\alpha_I$, as the interferers can be lesser than one unit distance, since $\delta_0$ is very low. When $\delta_0$ is moderately high ($>1$), and path loss exponent of interferer ($\alpha_I$) increases,  then ranging success probability increases due to reduction in interference power.

In Fig. \ref{fig:may10_1}, ranging success probability is plotted for a constant path loss exponent of interference signal $\alpha_I$=3.  When the distance to the desired target (R) is very low (0.5), and path loss exponent of desired signal ($\alpha_d$) increases, then ranging success probability increases  due to increase in desired signal power. Desired signal power increases with increase in $\alpha_d$, as the desired target is lesser than one unit distance. When the distance to the desired target is moderately high (R=1.5), and path loss exponent of desired signal ($\alpha_d$ ) increases,  then ranging success probability decreases due to reduction in desired signal power.   

\begin{figure}
\includegraphics[scale=1]{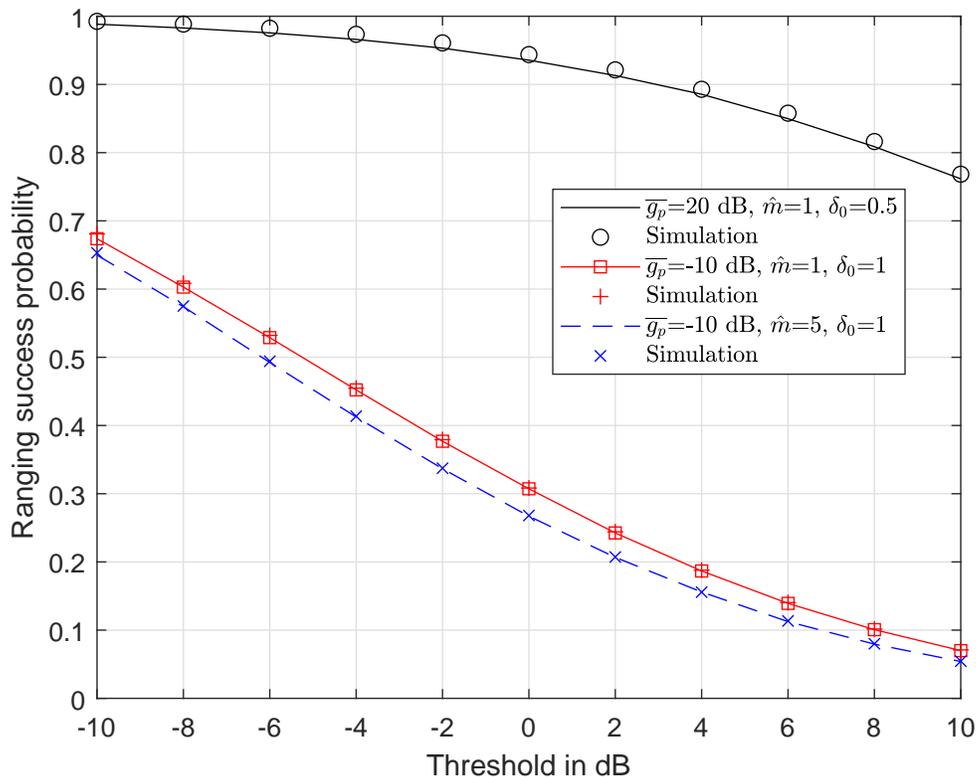}
\centering
\caption{Ranging success probability when interferer experiences Nakagami-m/Rayleigh fading}
\label{fig:nakagami_rayleighdesired}
\end{figure}

\begin{figure}
\includegraphics[scale=1]{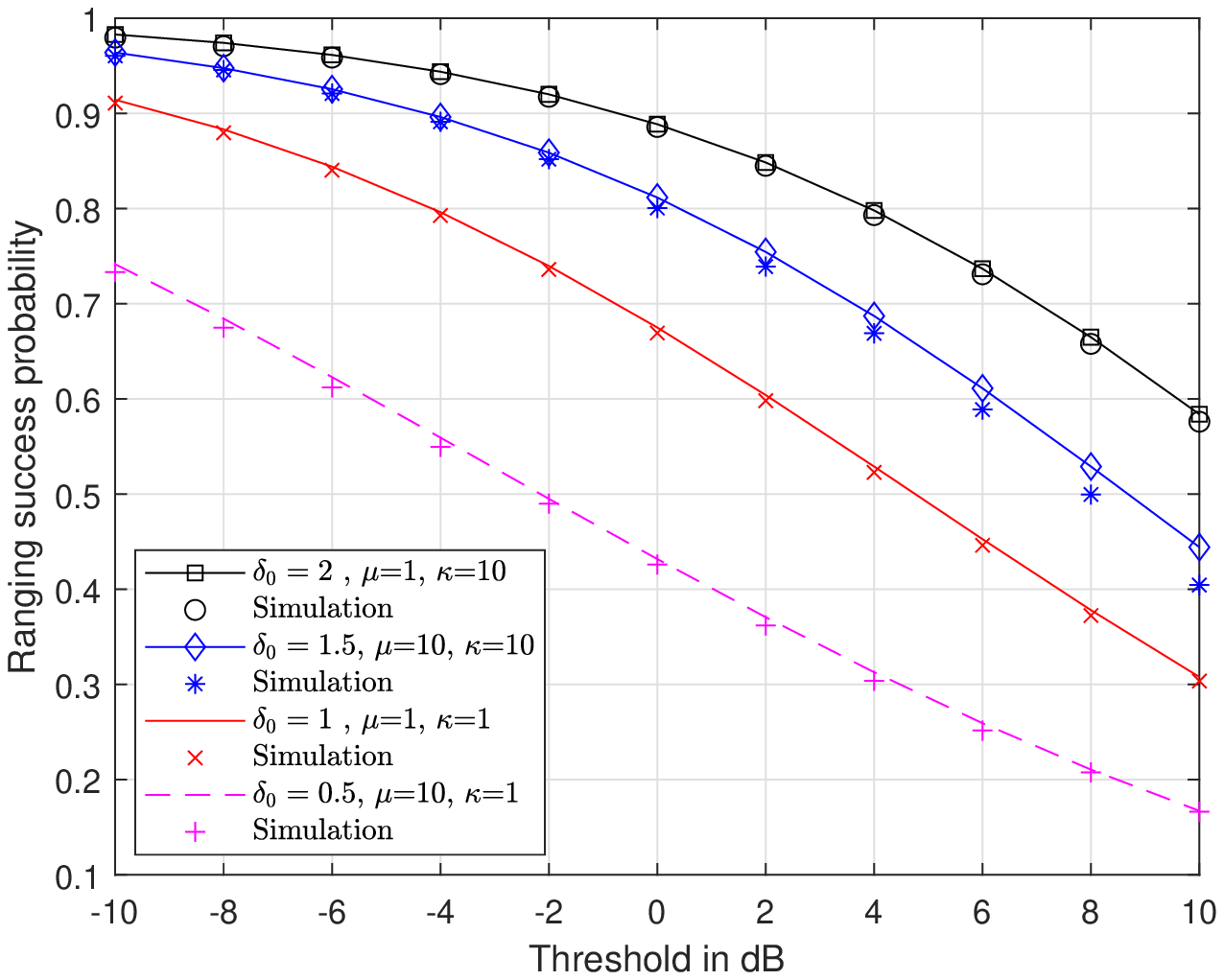}
\centering
\caption{Ranging success probability when interferer experiences $\kappa$-$\mu$ shadowed  fading with m=1}
\label{fig:kappamushadowed_rayleighdesired}
\end{figure}

\begin{figure}
\includegraphics[scale=1]{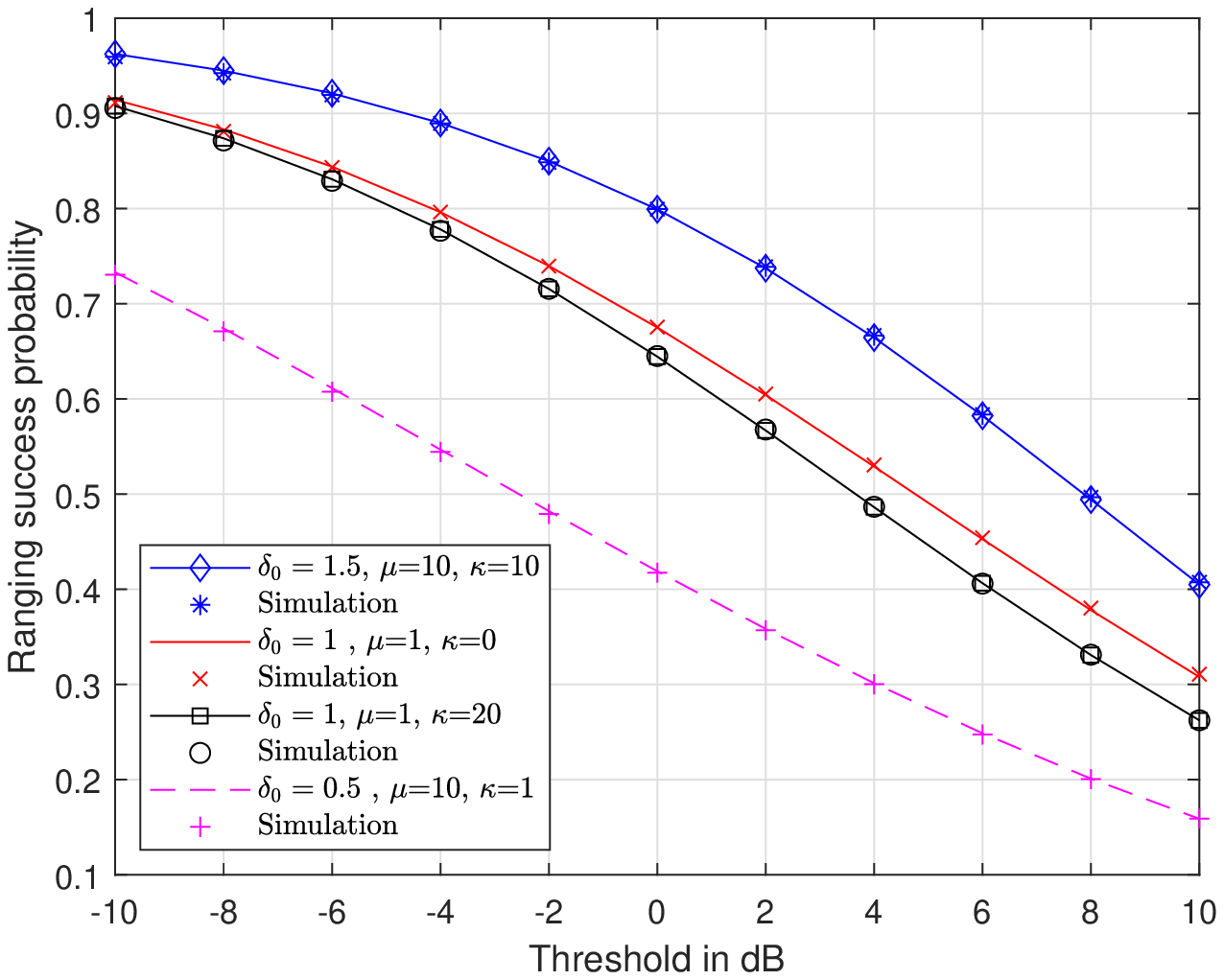}
\centering
\caption{Ranging success probability when interferer experiences $\kappa$-$\mu$  fading}
\label{fig:kappamu_rayleighdesired}
\end{figure}

\begin{figure}
\includegraphics[scale=1]{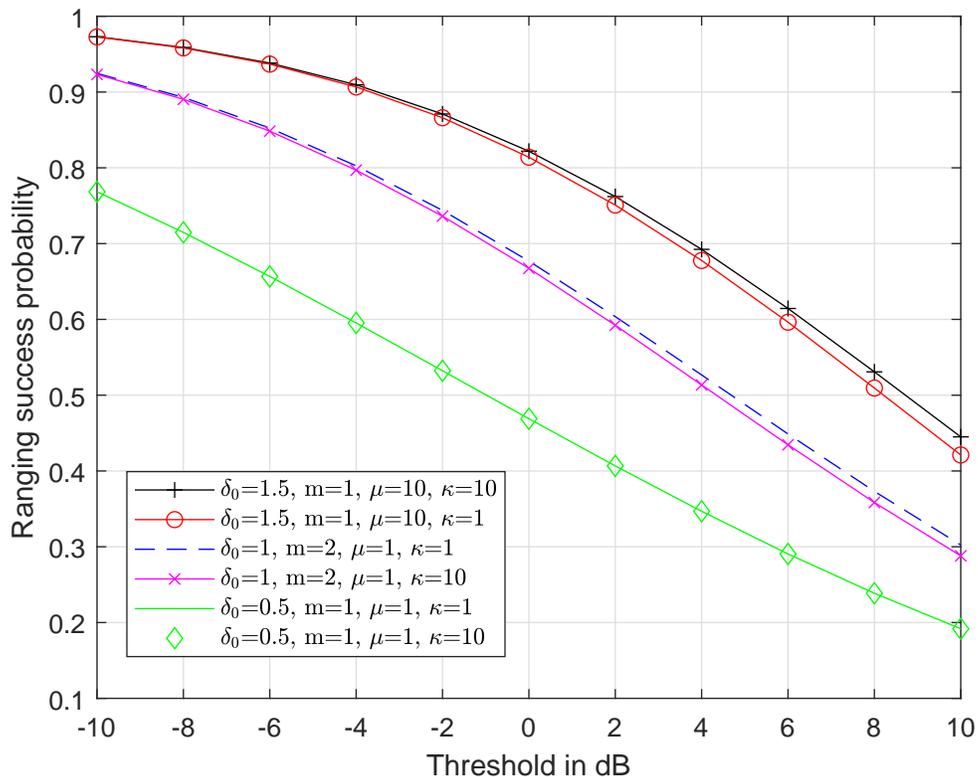}
\centering
\caption{Impact of parameters of $\kappa$-$\mu$ shadowed fading on ranging success probability}
\label{fig:impactofkappamum}
\end{figure}

\begin{figure}
\includegraphics[scale=1]{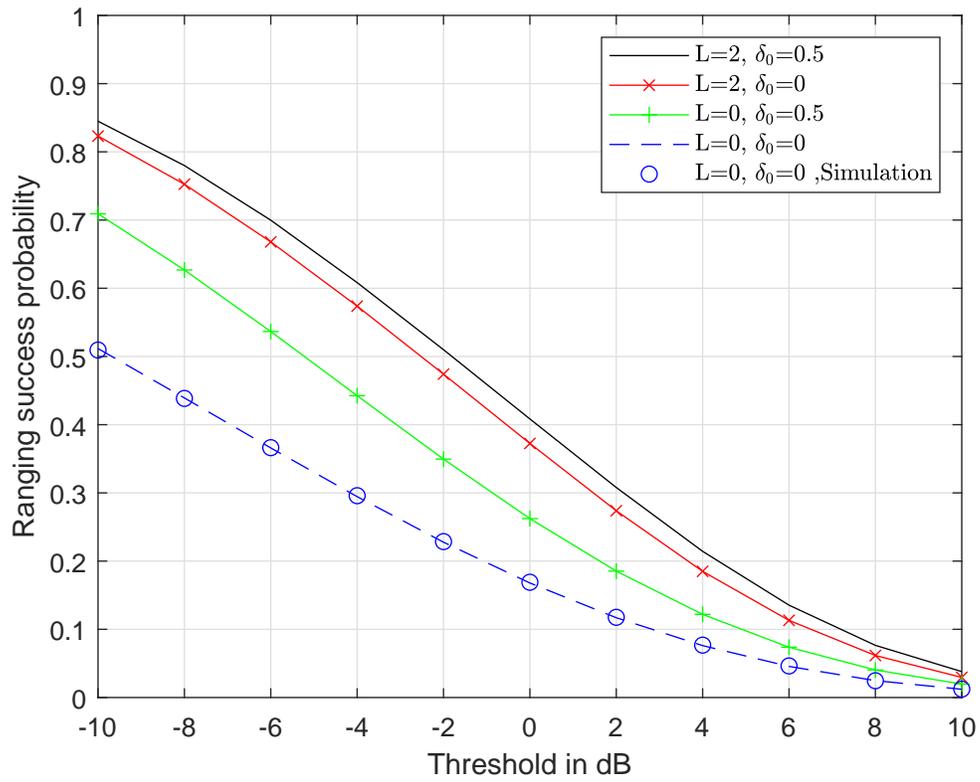}
\centering
\caption{Ranging success probability for arbitrary inter-lane distance when interferer experiences Rayleigh fading}
\label{fig:rayleigh_rayleigh_L}
\end{figure}

\begin{figure}
\includegraphics[scale=1]{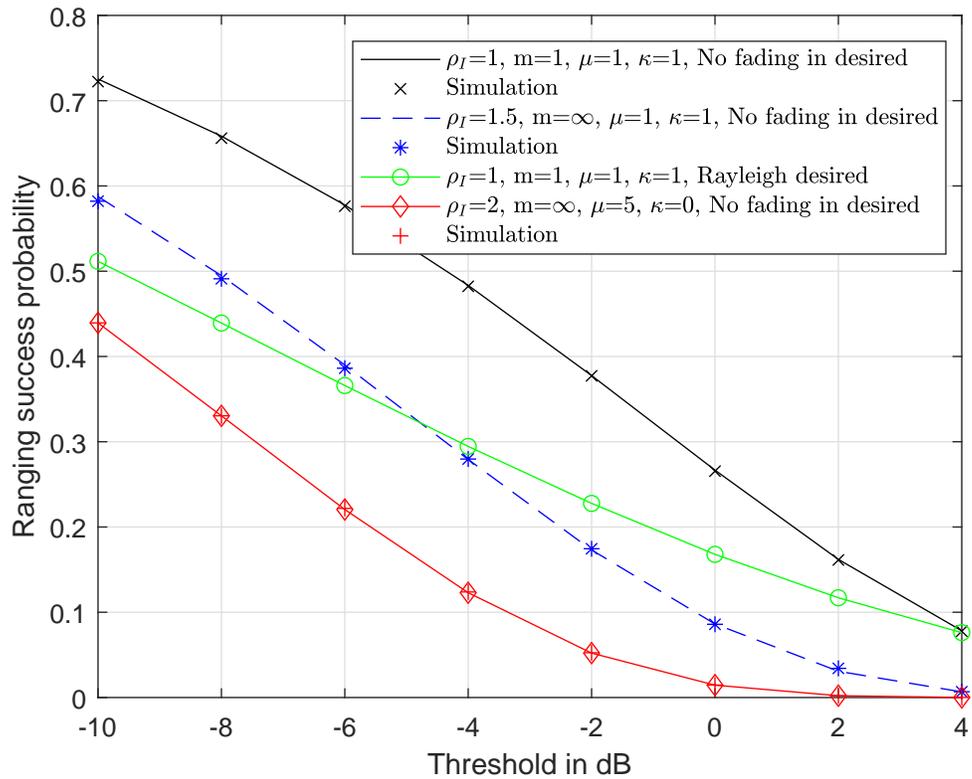}
\centering
\caption{Ranging success probability when desired channel experiences No-fading or Rayleigh fading}
\label{fig:may4}
\end{figure}

\begin{figure}
\includegraphics[scale=1]{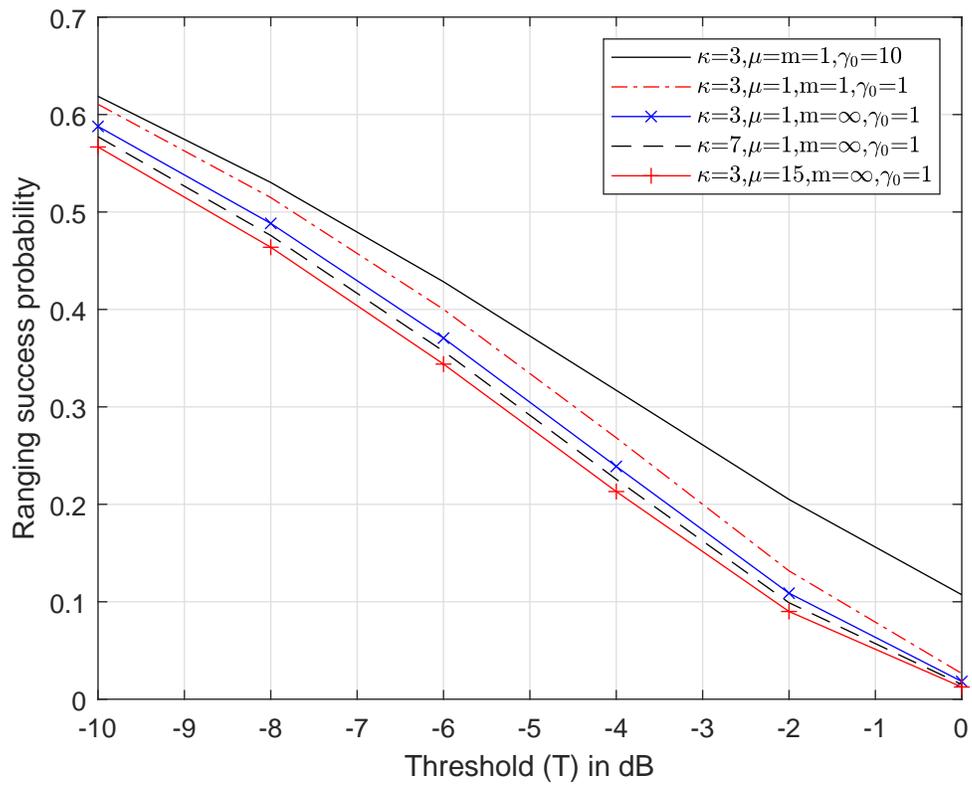}
\centering
\caption{Ranging success probability in presence of noise}
\label{fig:noise}
\end{figure}
\begin{figure}
\includegraphics[scale=1]{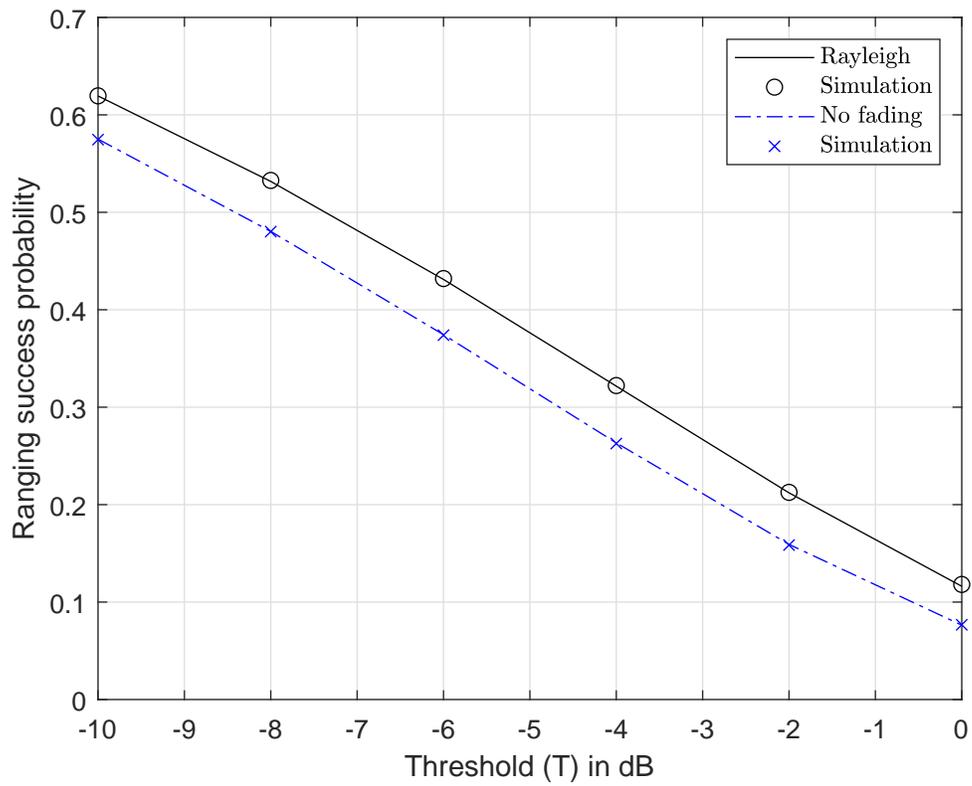}
\centering
\caption{Ranging success probability in presence of no fading and Rayleigh fading in interfering channels}
\label{fig:nofading}
\end{figure}

\begin{figure}
\includegraphics[scale=0.8]{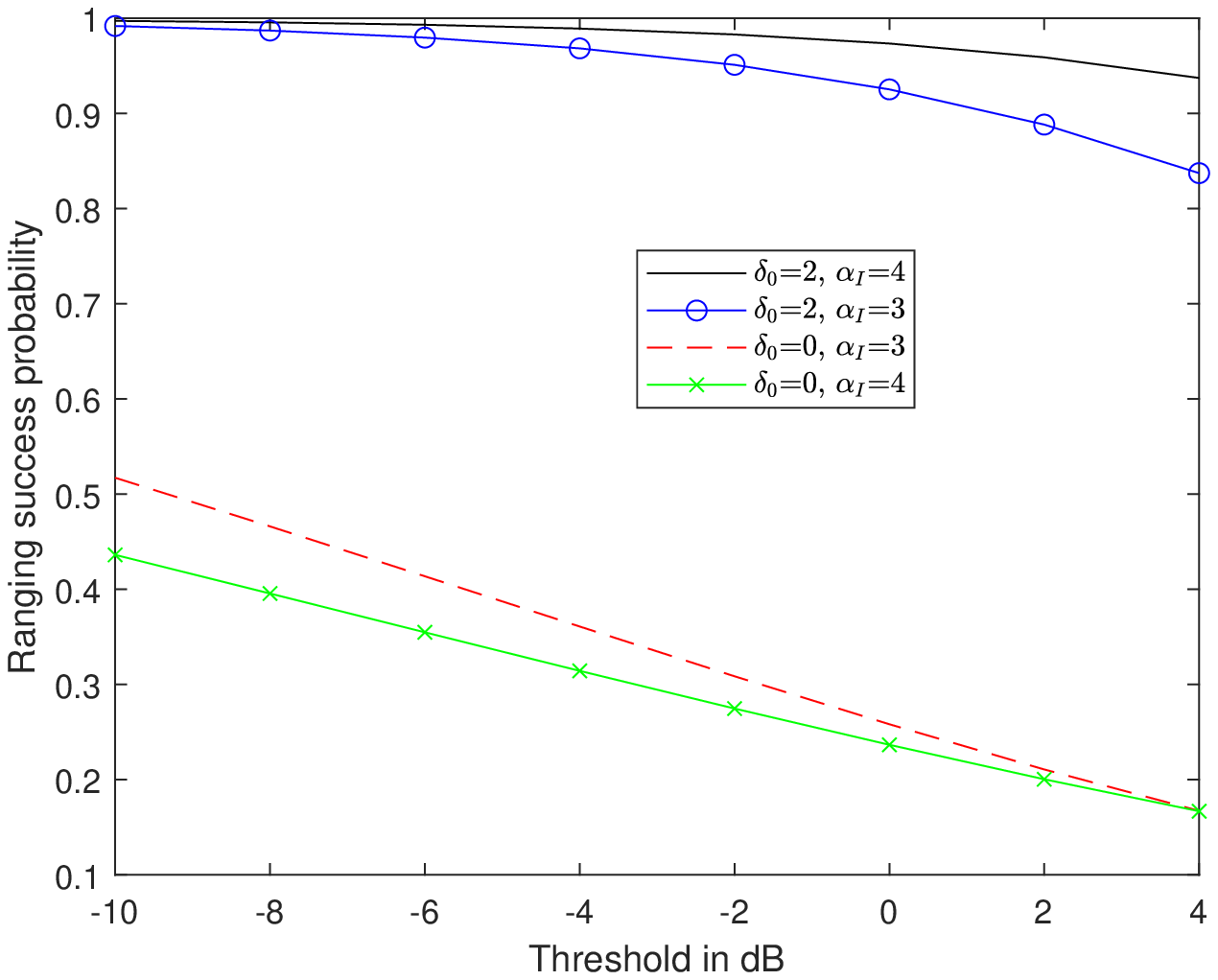}
\centering
\caption{Ranging success probability for constant $\alpha_d$}
\label{fig:may10}
\end{figure}

\begin{figure}
\includegraphics[scale=0.8]{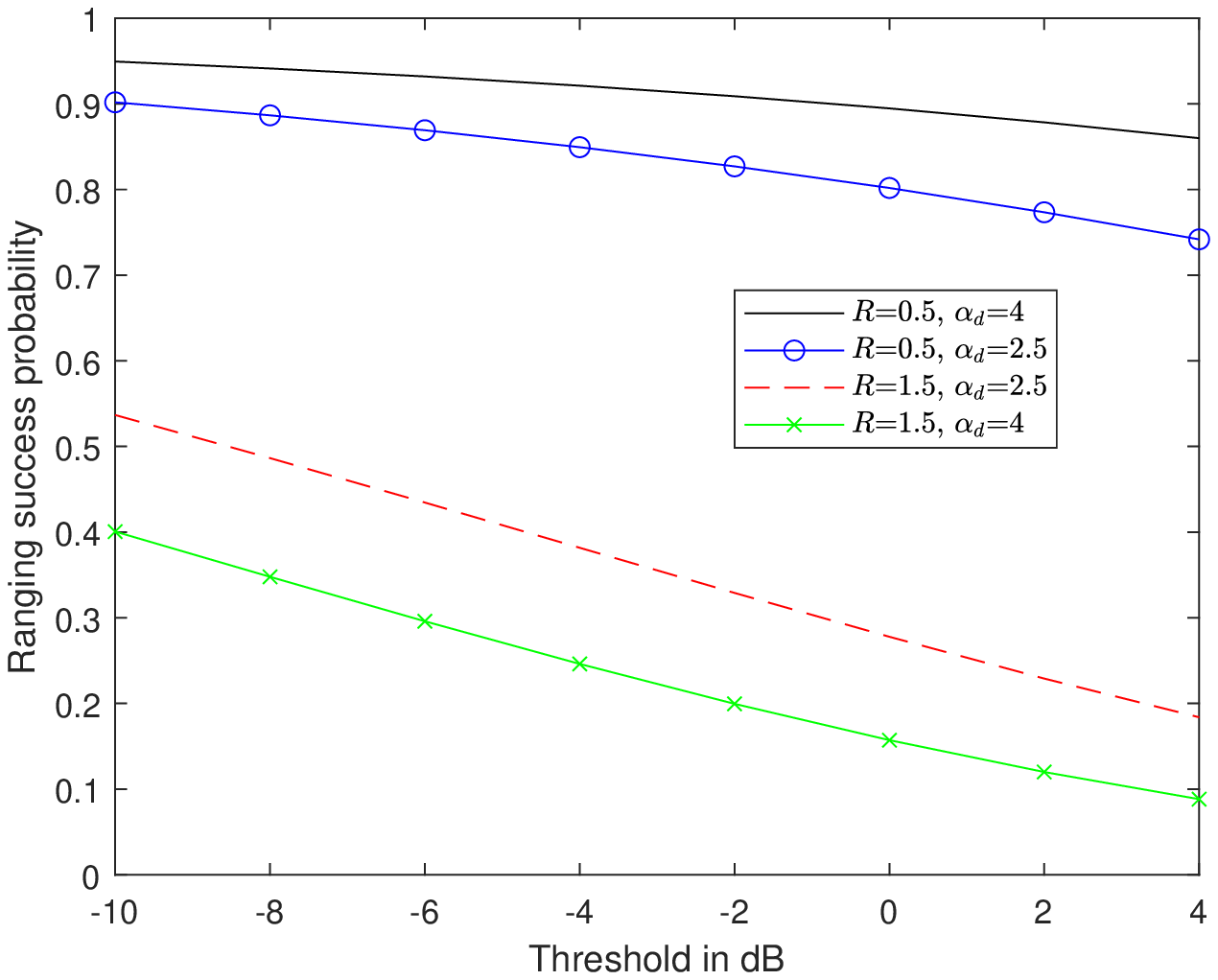}
\centering
\caption{Ranging success probability for constant $\alpha_I$}
\label{fig:may10_1}
\end{figure}

\section{Conclusions and Future Work}
In this paper we have derived the ranging success probability of automotive radar system, considering Poisson point process  modeled interferers. State-of-art literature had derived closed form expressions considering no-fading in desired channel and interferers. We have assumed generalized $\kappa$-$\mu$ shadowed fading in interferers. For desired channel, we have considered both Rayleigh and no-fading cases which required different methods to derive the results. 

As $\kappa$-$\mu$ shadowed fading generalizes  Rayleigh, Rician, Nakagami-m fading etc., ranging success probability for these fading models are also derived. When we consider Rayleigh fading,  in radar applications, the total desired channel will be Rayleigh-Rayleigh. To make the analysis amenable for PPP network assumption, we have expressed the PDF of power of Rayleigh-Rayleigh channel as  weighted sum of exponential PDF. This helped in expressing ranging success probability in terms of Laplace transform of interference. To the best of our knowledge, even the state-of-art PPP literature has not dealt with product ``Rayleigh-Rayleigh" channels.  

Also, to the best of our knowledge, for the first time, in this paper we have studied how the relationship between shadowing component (m) and number of clusters ($\mu$) can affect the impact of LOS component ($\kappa$) on ranging success probability. We have shown that when there is full shadowing ($m$=0.5), with increase in LOS component $\kappa$ in interferer, $P_s$ increases. Similarly when there is no shadowing component ($m$=$\infty$), with increase in LOS component $\kappa$ in interferer, $P_s$ decreases.

We have also made recommendations for when it would be best for city planners to increase longitudinal distance ($\delta_0$) for a given inter-lane distance or vice versa.  We have also observed how the value of distance to desired target ($R$) can influence the impact of change in desired path loss exponent ($\alpha_d$) on ranging success probability. Similarly, we have also observed how the value of longitudinal distance ($\delta_0$) can influence the impact of change in interferer path loss exponent ($\alpha_I$) on ranging success probability. 

We have also shown the importance of results derived in this paper assuming fading, comparing it with no-fading cases in desired and interfering channels. By not considering fading in desired channels,  to achieve a certain ranging success probability, we will design the system assuming it will require higher SINR threshold than what will be actually needed. By not considering fading in interferer channels,  to achieve a certain ranging success probability, we will design the system assuming it will require lower SINR threshold than what will be actually needed. The formulae derived for different fading scenarios will help system planners get an idea of SINR threshold required to achieve target success probability, for any of the popular fading that they observe in real life scenarios.

The results derived in this paper for Rayleigh fading in desired channels can be easily extended to Nakagami-m fading channels. But to consider generalized $\kappa$-$\mu$ shadowed fading in desired channels and derive simple closed form results is a significant open problem.
\bibliographystyle{IEEEtran}
\bibliography{Bib}
\end{document}